\renewcommand\nomgroup [1] { %
  \item [ \bfseries 
  \ifstrequal{#1}{A}{\textbf{A. Abbreviations}}{ %
  \ifstrequal{#1}{P}{\textbf{B. Parameters}}{ %
  \ifstrequal{#1}{V}{\textbf{C. Variables}}{}}} %
] }
\theoremstyle{definition}
\newtheorem{theorem}{Theorem}
\newtheorem{proposition}[theorem]{Proposition}
\theoremstyle{definition}
\newtheorem{definition}{Definition}
\newtheorem{remark}{Remark}
\newtheorem{assumption}{\textit{Assumption}}
\begin{document}

\title{\huge Co-Scheduling of Energy and Production in Discrete Manufacturing Considering Decision-Dependent Uncertainties}

\author{Yiyuan~Pan,
        Zhaojian~Wang
\thanks{This work was supported by the National Natural Science Foundation of China (62103265), and the Young Elite Scientist Sponsorship Program by China Association for Science and Technology (No.YESS20220320). (\textit{Corresponding author: Zhaojian Wang})	}
\thanks{Y. Pan, and Z. Wang are with the Key Laboratory of System Control, and Information Processing, Ministry of Education of China, Department of Automation, Shanghai Jiao Tong University, Shanghai 200240, China, (email:wangzhaojian@sjtu.edu.cn). }
}


\maketitle

\begin{abstract}
Modern discrete manufacturing requires real-time energy and production co-scheduling to reduce business costs. In discrete manufacturing, production lines and equipment are complex and numerous, which introduces significant uncertainty during the production process. Among these uncertainties, decision-dependent uncertainties (DDUs) pose additional challenges in finding optimal production strategies, as the signature or the shape of the uncertainty set cannot be determined before solving the model. However, existing research does not account for decision-dependent uncertainties (DDUs) present in discrete manufacturing; moreover, current algorithms for solving models with DDUs suffer from high computational complexity, making them unsuitable for the real-time control requirements of modern industry. To this end, we proposed an energy-production co-scheduling model for discrete manufacturing, taking into account decision-dependent uncertainties (DDUs). Subsequently, we devised a method for transforming the constraints associated with DDUs into linear form ones. Finally, we designed a novel algorithm based on the column-and-constraint generation (C\&CG) algorithm and undertook a theoretical analysis of its performance of convergence and algorithmic complexity. A real-world engine assembly line was used to test our model and algorithm. Simulation results demonstrate that our method significantly reduces production costs and achieves better frequency regulation performance.
\end{abstract}

\begin{IEEEkeywords}
Decision-dependent uncertainty, decision-independent uncertainty, discrete manufacturing, frequency regulation, business cost. 
\end{IEEEkeywords}

%
\IEEEpeerreviewmaketitle

\section*{Nomenclature}
\addcontentsline{toc}{section}{Nomenclature}
\subsection{Abbreviations}
\begin{IEEEdescription}[\IEEEusemathlabelsep\IEEEsetlabelwidth{$\mathcal{S}_3, \mathcal{R}_2$}]
	\item[FR] Frequency regulation.
	\item[RTP] Real-time price.
	\item[DER] Distributed energy resource.
	\item[PV] Photovoltaic.
	\item[BESS] Battery energy storage system.
\end{IEEEdescription}
\subsection{Variables}
\begin{IEEEdescription}[\IEEEusemathlabelsep\IEEEsetlabelwidth{$I^h_{n,p}, E^h$}]
	\item[$I^h_{n,p}$] Working state of cell $n$'s equipment $p$ at hour $h$.
	\item[$G^h_n$] The output of cell $n$ hour $h$.
	\item[$C^h_n$] The input of cell $n$ hour $h$.
	\item[$T^h$] Total time of equipment use at hour $h$.
	\item[$E^h$] Total power consumption at hour $h$.
	\item[$\tilde{T}^h$] Actual time of equipment use at hour $h$.
	\item[$\tilde{E}^h$] Actual power consumption at hour $h$.
	\item[$B^h_m$] Cargo amount in buffer $m$ at hour $h$.
	\item[$E_{EU}^h$] Compensatory power from BESS at hour $h$.
	\item[$E_{LU}^h$] Compensatory power from EPN at hour $h$.
	\item[$E_{SU}^h$] Power from EPN to BESS at hour $h$.
	\item[$S^h$] BESS electrical energy stock at hour $h$.
	\item[$B^h_{ss}$] Sales of by-products at hour $h$.
\end{IEEEdescription}
\subsection{Parameters}
\begin{IEEEdescription}[\IEEEusemathlabelsep\IEEEsetlabelwidth{$\underline{P}_{k}, \bar{P}_{k}$}]
	\item[$t_{n,p}$] Cell $n$'s Equipment $p$ single use time cost.
	\item[$T_{(n,p)}$] Cell $n$'s Equipment $p$ minimum time limit.
	\item[$e_{n,p}$] Cell $n$'s Equipment $p$ single use power cost.
	\item[$\alpha$] No-DDU model's yield rate.
	\item[$\alpha_{n,p}^h$] DDU model's yield rate for cell $n$'s equipment $p$ at hour $h$.
	\item[$k_n$] Cell $n$'s location.
	\item[$N_n$] Cell $n$ Maximum of uses per day.
	\item[$k_m$] Single transport volume of buffer $m$.
	\item[$\delta t$] Single transport time for buffer $m$.
	\item[$M_s$] Set of secondary product output nodes.
	\item[$\bar{S}$] BESS upper bound on electricity stock.
	\item[$\underline{P}_{k}, \bar{P}_{k}$] Climbing constraints lower, upper bound.
	\item[$\eta$] No-DDU model's FR parameter.
	\item[$\eta^h$] DDU model's FR parameter at hour $h$.
	\item[$\zeta$] No-DDU model's propensity to sell secondary products.
	\item[$\zeta^h$] DDU model's propensity to sell secondary products at hour $h$.
	\item[$s_m, s_s$] Selling price of main and secondary products.
	\item[$\lambda_{mis}$] Per-kwh regulation mismatch penalty.
	\item[$J$] Number of cell units in a battery.
	\item[$V_{nom}$] Nominal voltage of cell units in a battery.
	\item[$\beta_1 - \beta_7$] Coefficients of battery degradation.
	\item[$\delta_t'$] Battery degradation time interval.
	\item[$r_t, d_t$] Coefficient of FR demand.
	\item[$E_{ex}^h$] Expected power consumption at hour $h$.
	\item[$\mathcal{S}_1$] ambiguous set DDU.
	\item[$\mathcal{S}_2$] One factor probability distribution DDU.
	\item[$\mathcal{S}_3$] Multi-factors probability distribution DDU.
	\item[$\mathcal{R}_2$] Normal expected data distribution set.
	\item[$\mathcal{R}_2$] DDU-related expected data distribution set.
	\item[$f_b(\cdot)$] Battery degradation penalty function.
	\item[$f(\cdot)$] Inprecise Dirichlet model function.
	\item[$\gamma_1, \gamma_2$] DDU-related probability distribution correction parameters.
	\item[$LC_i^h$] Type $i$ combination of production line data at hour $h$ from historical data.
	\item[$w_i^h$] Ratio of secondary and main products for combination $i$ from historical data.
	\item[$\theta_i^h$] Probability of combination $i$ at hour $h$.
\end{IEEEdescription}

\section{Introduction}
%
%
%
%
\subsection{Backgroound} \label{Background}
The discrete manufacturing sector has experienced significant growth in recent years, driven by the globalization of supply chains and the rise of customization \cite{wang2019modelling}. Unlike process manufacturing, such as the chemical industry, discrete manufacturing—like engine assembly—typically involves a greater number of machines and more flexible and interchangeable process sequences. This results in a higher degree of freedom at the scheduling level, allowing for more complex decision-making and adjustments during production \cite{may2017energy}. This has led to a growing interest in the co-scheduling problem based on production and energy in discrete manufacturing. However, the inherent uncertainty of the discrete manufacturing industry presents a significant challenge to efficient scheduling \cite{parente2020production, dowling2018economic, kern2021peak}.

The uncertainties inherent to discrete manufacturing can be classified into two categories: decision-independent uncertainties (DIUs) and decision-dependent uncertainties (DDUs). Modern discrete manufacturing industrial parks typically consist of a production system and an energy system. The production system includes production lines and warehousing facilities, while the energy system comprises distributed energy resources (DER) generation equipment and energy storage systems. In the industrial park, DIUs originate from the intrinsic attributes of the equipment or the production environment, for instance, the quality of the raw materials or the daily power generation of solar panels. In contrast, DDUs are uncertainties related to production decisions. The following are the main types of DDUs in industrial parks: (1) The inclination towards producing main or by-products is related to uncertain order status and future production conditions, and it's also connected to current production decisions; (2) Product yield is influenced by both inherent equipment uncertainties and the choice of the production line; (3) Power frequency regulation requirements in production objectives are constrained by actual electricity usage decisions and are linked to unknown requirements form the power utility as well.

A production and energy co-scheduling model for discrete manufacturing industries can be modeled as a two-stage robust optimization model. In the production process, production profit serves as the primary objective, while energy requirements (such as frequency regulation and minimizing battery degradation) are considered secondary. As a result, this scheduling process is modeled as a two-stage optimization problem. Additionally, due to the presence of uncertainties, including DIUs and DDUs, the model becomes a robust decision-making problem.

\subsection{Related Works}
Existing research on uncertainties in the scheduling process of discrete manufacturing primarily focuses on decision-independent uncertainties (DIUs). Work \cite{DIUU1} concentrates on discrete manufacturing industries with a focus on DIUs of fluctuations in product and customer demand. It proposes a two-dimensional model to assess the ability to cope with these DIUs. Work \cite{DIUU2} focuses on the DIU of fluctuating daily production quantities and employs the non-dominated sorting adaptive differential evolution (NSJADE) algorithm to obtain a robust order scheduling solution. Work \cite{DIUU3} focuses on the uncertainty of machine failures, which are modeled using an exponential distribution and then handled through Markov chains. The uncertainty addressed in work \cite{DIUU4} is the examination of machine yield as a fuzzy variable, which is characterized by a probability-of-failure-working-time equation. Work \cite{DIU5} is concerned with DDU of performance degradation (e.g., equipment failures) within discrete manufacturing. To address these uncertainties, a simultaneous update method combining Adaboost, DNN, and LSTM is proposed. However, some uncertainties within these DIUs, such as equipment defect rates \cite{DIUU4, DIU5}—which are not only related to the inherent uncertainties of the equipment but also to the combination of upstream production equipment—should be modeled as decision-dependent uncertainties (DDUs).

Current algorithms for solving two-stage robust optimization models considering DDUs are primarily based on iterative decomposition algorithms or scenario tree-based partitioning algorithms. In work \cite{ddu1}, recursive optimization methods and numerical enumeration techniques are employed to solve the multi-stage models with the presence of DDUs. Work \cite{ddu2} investigates the two-stage robust optimization with polyhedral DDUs and proposes an iterative algorithm based on Benders decomposition. The algorithm employs optimality and feasibility cuts to address the coupling between uncertainty and decision-making. Work \cite{ddu3} initially transforms the multi-stage optimization problem containing DDU into its dualized form, subsequently employing the adversarial cutting plane algorithm to resolve the model. Work \cite{ddu4} employs a scenario tree based on model predictive control to transform the optimization problem containing DDUs into a series of subproblems to be solved under different learning scenarios. Work \cite{ddu6} uses an adaptive reliability improvement unit commitment (ARIUC) algorithm to efficiently decompose the DDU-related model into multiple sub-problems for solving. In conclusion, works \cite{ddu1, ddu2, ddu3} employ iterative decomposition algorithms, while works \cite{ddu4, ddu6} utilize partitioning approaches. However, both of these algorithms (iterative decomposition algorithms and scenario tree-based partitioning algorithms) exhibit high computational complexity, making them unsuitable for the real-time and rapid scheduling needs of modern industrial applications.

\subsection{Methodology}
To sum up, existing work on the co-scheduling of industrial production and energy, considering the presence of uncertainties, presents the following two issues about modeling and solving: 1) \textit{Modeling:} Existing work does not examine the discrete manufacturing scheduling problem that encompasses DDUs; rather, it focuses exclusively on DIUs. However, some of the DIUs therein are pertinent to decision-making processes and should be modeled as DDUs. 2) \textit{Solving:} In contrast to previous methodologies for DDU-related two-stage problems that focused solely on solvability, the modern industry requires algorithms that converge with low computational complexity, thereby enabling rapid and real-time control to achieve optimal production results.

In light of the aforementioned issues, we proposed a two-stage robust optimization model considering a complete set of three types of DDUs for discrete manufacturing, and a novel algorithm with performance guarantees is also designed. First of all, we constructed a model comprising multiple DDUs. Then, we simplify the constraints associated with the aforementioned three DDUs to linear forms using methods based on ambiguous sets, the imprecise Dirichlet model, and Cantelli's inequality, respectively. The problem was subsequently reduced to a mixed integer quadratic program (MIQP) problem. Subsequently, we proposed the decision-dependent C\&CG (DDCCG) algorithm based on the traditional C\&CG algorithm and provided a performance analysis for its convergence and optimality. Finally, we use the Petri net to represent a real-world engine assembly line and test our model and algorithm on this study case. The findings demonstrate that the industrial park reduces costs, enhances the anti-interference capabilities of the production line, and responds to frequency regulation and peak shaving requirements. These results substantiate the superiority of the proposed model and validate the reliability of the underlying algorithm.

In summary, the main contributions of this paper include:

\begin{itemize}
    \item This paper presents a two-stage robust optimization model for the discrete manufacturing industry that considers multiple DDUs. In particular, the model incorporates product yield DDU, frequency regulation penalty DDU, and product structure DDU into the energy and production co-scheduling process. In comparison to previous work in this field, this model more accurately reflects the interdependence between decisions and uncertainties in real-world industrial parks.
    \item We presented three paradigms for reducing different DDU-related constraints to a linear form. In particular, the three types of DDUs in discrete manufacturing are mathematically represented as ambiguous sets, univariate distributions, and multivariate distributions, which encompass the majority of real-world DDUs. Therefore, the methods presented in this paper can be extended and applied to other engineering scenarios in the real world.
    \item We proposed the DDCCG algorithm for the decision-dependent two-stage robust optimization model and proved its convergence and optimality theoretically. The original C\&CG algorithm has a fast convergence speed but cannot adapt to the decision-dependent model. Our work removes this limitation and can be adapted to the needs of modern industry.
\end{itemize}

The remainder of this paper is structured as follows. Section \ref{Section2} introduces the basic form of the optimization problem. Section \ref{Section3} presents modeling and linearization techniques for DDUs. Section \ref{Section4} describes the subsequent simplification of the problem and algorithm design. Section \ref{Section5} is case analysis. The last section summarizes our work.

\begin{figure}[!t]
\centering
\includegraphics[width=3in]{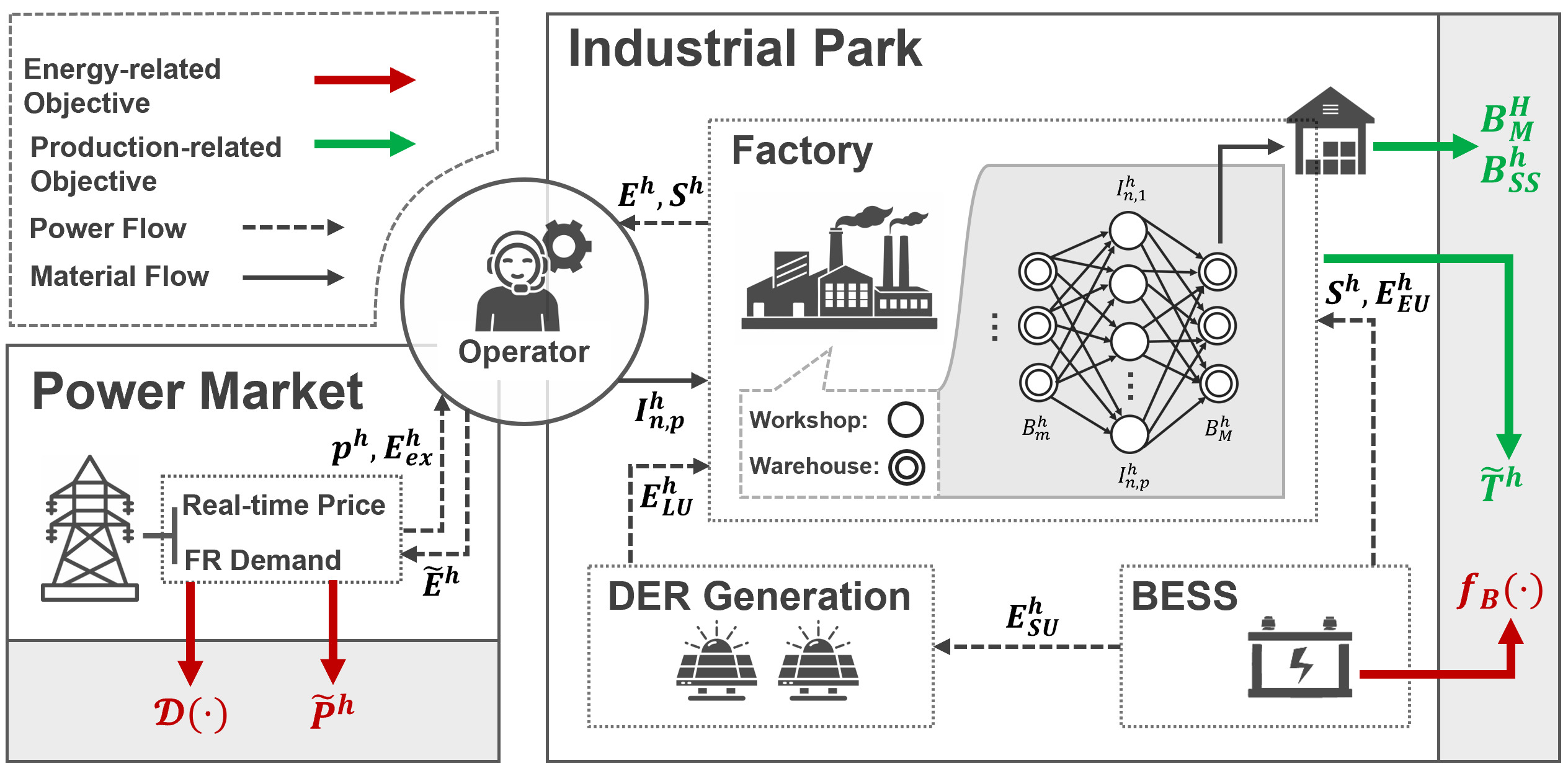}
\caption{Schematic diagram of the overall model architecture.}
\label{Fig1}
\end{figure}
\section{Problem Formulation} \label{Section2}
\subsection{Problem Overview}
A factory operator serves as the solver for a two-stage robust optimization problem, making decisions based on data collected from the power grid (e.g., electricity prices, frequency regulation requirements) and production information from the industrial park, aiming to maximize production profit.

A modern discrete manufacturing industrial park comprises a production system and an energy system. Without loss of generality, the production system includes production workshops and warehousing facilities \cite{cps}. It is modeled as a graph $G:=(V,E)$, where $V=\{1,...,N; N+1,...,N+M\}$ is the set of nodes. $N$ represents the number of production workshops, and $M$ represents the number of warehousing facilities. $E \subseteq V \times V$ is a set of directed edges. The production workshops provide various services, while the warehousing facilities handle the transfer of semi-finished products. For simplicity, the energy network includes a distributed energy resources (DER) generation device and a battery energy storage system (BESS), both of which are connected to the production network. The operator can access the energy network to supply power to the equipment. The scheduling time range is $T=\{1, ... ,H\}$.

\subsection{Objective Function}
As the entity responsible for the maintenance of the industrial park, our objective is to implement strategies that will minimize the total economic cost. The objective function can be expressed as follows
\begin{multline}\label{1}
    \min_{I^h_{n,p},E^h_{EU},E^h_{LU}}\sum\nolimits_{h=0}^H\tilde {T}^h\cdot r+f_B(S^h) \\
    + \sum\nolimits_{h=0}^H\tilde{P}^h+\sum\nolimits_{h=0}^H E^h_{fr} \\
    -B_M^H\cdot s_m-\zeta^h \sum\nolimits_{h=0}^HB^h_{ss}\cdot  s_s
\end{multline}

 In (\ref{1}) The first term is the equipment usage loss (intuitively, time of use and the maintenance costs are proportional.), the second term is the battery degradation penalty function \cite{bat1, bat2}, the third term is the power purchase cost, the fourth term is the frequency regulation penalty term \cite{fr}, and the fifth and sixth terms are the main products and by-products sales. The following equations (\ref{2a}) and (\ref{2b}) show the specific expression for the degradation function \cite{bat1, bat2}.
\begin{subequations}
\begin{gather}
    f_B(S^h)=J\lambda _{cell}V_{nom}h_{cell}(S^h)\Delta t',\forall t   \label{2a} \\
    h_{cell}(S^h)=\beta S^h,\forall t \label{2b}
\end{gather}
\end{subequations}

\begin{remark}
    (\textit{DDU variable $\zeta^h$}) The factory's production preference for primary and secondary products is a decision-dependent uncertainty (DDU) and this tendency is characterized by the weight factor $\zeta^h$. The factory's production tendencies for different products are related to actual production conditions. However, due to the inherent uncertainty in future order expectations and production conditions, the above tendency is DDU.
\end{remark}

\subsection{Production Model and Constraint}
Without loss of generality, the following matrix is used to describe the production-related variables in modern industries.
\begin{equation}
    \begin{bmatrix}
      \text{ID}_{w1}  & I_{w1,p}^h & G_{w1}^h & C_{w1}^h & t_{{w1},p} & e_{{w1},p} & \alpha_{w1,p} & k_{w1} \\
      \text{ID}_{w2}  & I_{w2,p}^h & G_{w2}^h & C_{w2}^h & t_{{w2},p} & e_{{w2},p} & \alpha_{w2,p} & k_{w2}\\
      ...  & ... & ... & ... & ... & ... & ... & ...\\
      \text{ID}_{N}  & I_{N,p}^h & G_{N}^h & C_N^h & t_{{N},p} & e_{{N},p} & \alpha_{N,p} & k_{N}
    \end{bmatrix}    \nonumber
\end{equation}
The $\text{ID}_{c1}, I_{w1,p}^h, G_{w1}^h, C_{w1}^h, t_{{w1},p}, e_{{w1},p}, \alpha_{w1,p}, k_{w1}$ are workshop ID, equipment options, manufacturing output, manufacturing input, manufacturing time, manufacturing energy consumption, yield rate, and geographic location, respectively. The core variable characterizing the production state is $I_{n,p}^{h}$. It is a 0-1 variable that denotes the state of the $p$-th equipment or service for the workshop $n$ at the moment $h$. The above production-related variables need to satisfy the following constraints.
\begin{subequations}
\begin{gather}
\sum\nolimits_{p=1}^{P}I^{h}_{n,p} \le 1   \label{4a}\\
\sum\nolimits_{h=1}^{H}\sum\nolimits_{p=1}^{P}I^{h}_{n,p}\le N_{n}   \label{4b}\\
T^h = \sum\nolimits_{n=1}^{N}\sum\nolimits_{p=1}^{P}I^{h}_{n,p} \cdot t_{n,p}   \label{4c}\\
E^h = \sum\nolimits_{n=1}^{N}\sum\nolimits_{p=1}^{P}I^{h}_{n,p} \cdot e_{n,p}   \label{4d}\\
\tilde{T}^h=T^h+\frac{B_m^h-B_M^{h-1}}{k_m} \cdot \Delta t   \label{4e}\\
G^h_n=\sum\nolimits_{p=0}^P I_{n,p}^h \cdot g_{n,p} \cdot \alpha^h_{n,p}   \label{4f}\\
C^h_n=\sum\nolimits_{p=0}^P I_{n,p}^h \cdot c_{n,p}   \label{4g}
\end{gather}        
\end{subequations}

Constraint (\ref{4a}) represents the uniqueness of the service for each workshop at the same moment. (\ref{4b}) expresses the maximum usage limit for each piece of equipment per day. (\ref{4c}) and (\ref{4d}) characterize the production time and energy cost at time $h$. (\ref{4e}) indicates the production time cost includes the cost of transportation time cost and equipment operation time. (\ref{4f}) and (\ref{4g}) express the production output and input for workshop $n$ at time $h$. 

\begin{remark}
    (\textit{DDU variable $\alpha^h_{n,p}$}) Production yield is a decision-dependent uncertainty (DDU). It is influenced by the inherent uncertainties of the equipment itself. Additionally, the yield of specific equipment is affected by upstream processes or the combination of production lines, making it decision-dependent. For example, in engine assembly, semi-finished products that have undergone the crankshaft grinding process have a higher success rate during the crankshaft mounting process.
\end{remark}

The time-rolling base that has been adopted in the existing work is not compatible with the actual production process. In the context of discrete manufacturing dispatch, factories are treated as discrete-time systems. The time-rolling base postulates that at each moment, the operator is capable of ascertaining the production status of the equipment. This approach does not account for the duration of equipment utilization. The duration of a single-use varies from one piece of equipment to another. It assumes that the production states for certain equipment of neighboring moments $I_{n,p}^{h}$ and $I_{n,p}^{h+1}$ are independent. However, in practice, large machines with longer minimum-use times are costly to start up, indicating that the production states of neighboring moments are not independent. Figure \ref{fig2} illustrates this: if heavy equipment is started up at moment t, it will still run at moment t+1 for economic reasons. So we design a new constraint as follows. It indicates the single minimum usage time $T_{n,p}$ for equipment $p$.
\begin{align}
    \sum_{k \in P(n)}I_{n,k}^h \cdot (I_{n,k}^{h+t}-I_{n,p}) , \forall t = 0,1,...,T_{n,p}
\end{align}

\begin{figure}[!t]
\centering
\includegraphics[width=3in]{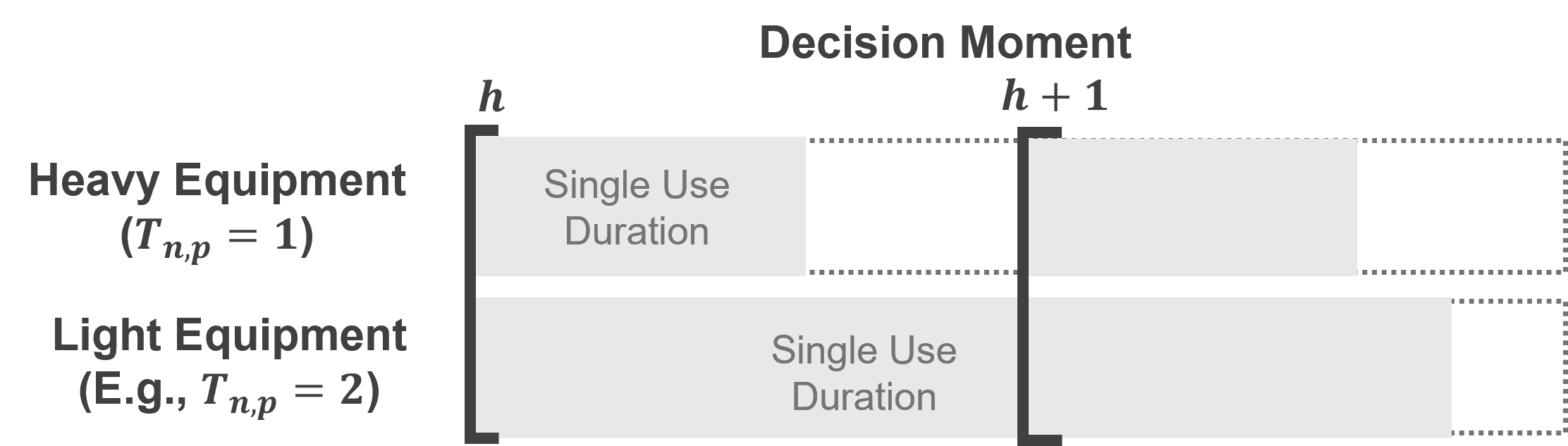}
\caption{Difference in single-use time for different equipment.}
\label{fig2}
\end{figure}
We have just discussed constraints related to the production line, and next, we will analyze the warehousing and logistics constraints. For this part, buffers exist at the back end of each workshop for storage and distribution of semi-finished products. We obtain the following constraints. 
\begin{subequations}
\begin{gather}
B_m^{h+1} = B_m^h +\Delta W_m^{h+1},  m\notin M_s   \label{6a}\\
B_m^{h+1} = B_m^h -B_{ss}^h +\Delta W_m^{h+1},  m\in M_s   \label{6b}\\
\Delta W_m^{h+1}=\sum\nolimits_{i \in P(m)}G_i^h-\sum\nolimits_{i \in S(m)}C_i^h   \label{6c}\\
B_m^h \ge 0   \label{6d}\\
B^0_m = \begin{cases} 
C, & \text{if } m = 0, \\
0, & \text{if } m \neq 0.
\end{cases} \label{6e}
\end{gather}        
\end{subequations}

(\ref{6a}) and (\ref{6b}) specify the time recurrence relationship for the buffers and that constraint (\ref{6b}) is for the special buffers that can directly output by-products. (\ref{6c}) represents the buffer volume of change in inventories. $W_m^{h+1}$ represents the difference between production of the upstream workshops and cost of the downstream workshops for buffer $m$. $P(\cdot)$ and $S(\cdot)$ represent the upstream and downstream workshops of the current warehouse respectively. (\ref{6d}) -(\ref{6e}) define the boundary constraints for the storage states.

\subsection{Energy Model and Constraints}
The introduction of DER devices and BESS in modern industrial parks is intended to reduce the cost of purchasing electricity. For simplicity, this paper focuses solely on solar panels as the DER device \cite{epn}. The introduction of the BESS and DER devices imposes the following constraints on the industrial park.
\begin{subequations}
\begin{gather}
\tilde{E}^h=E^h-E_{EU}^h-E_{LU}^h   \label{7a}\\
E_{external}^h=E^h_{SU}+E_{LU}^h   \label{7b}\\
\tilde{P}^h =\tilde{E}^h \cdot p^h   \label{7c}\\
S^{h+1}=S^h-\lambda  \cdot  E_{EU}^h+ \xi \cdot  E_{SU}^h   \label{7d}\\
0\le S^h \le \bar{S}   \label{7e}\\
\underline{k} \le |S^{h+1}-S^h|\le \bar{k}   \label{7f}\\
E_{fr}^h \ge \mathfrak{D} (\tilde{E}^h-E^h_{ex})   \label{7g}
\end{gather}
\end{subequations}

Equation (\ref{7a}) represents the nodal power injection constraint. (\ref{7b}) demonstrates power from the DER can be either stored in the BESS or used directly. (\ref{7c}) shows the total purchased electricity cost, where the real-time price (RTP) is in the form of peak and valley tariffs. (\ref{7d}) –(\ref{7f}) are BESS-related constraints. (\ref{7d}) is the BESS state transfer equation, and (\ref{7e}) and (\ref{7f}) are the energy storage unit capacity limit and adjacent time interval creep constraints. $\underline{k}, \bar{k}$ are lower and upper bounds. (\ref{7g}) denotes the frequency regulation penalty term, $\mathfrak{D}(\cdot)$ here represents a general distance measure function.

\begin{remark}
    (\textit{DDU variable $E_{fr}^h$}) The penalty term for frequency regulation is related to the expected power consumption $E_{ex}^h$ from the power utility, which is unknown to the factory. Meanwhile, it is also related to the actual power consumption data of the factory and uncertain variables, so it is a decision-related uncertainty.
\end{remark}

\section{Modeling of DDUs} \label{Section3}
This section will look at the three most prominent forms of DDU in actual discrete manufacturing production. The mathematical forms of these three DDU variables are given and then simplified with different linearization methods. 

\subsection{Classification of DDUs}
For DDU variables, we cannot grasp their specific values but can estimate them. It can be divided into two forms according to the variable's characteristics and state. If the decision-maker can estimate the distribution of DDU through historical data or physical law, then the probability distribution can be used to characterize DDU. If the decision-maker knows nothing, the DDU should be described by ambiguous sets. For probability DDU, it can also be divided into univariate probability DDU and multivariate probability one.

Returning to the actual production process of discrete manufacturing, we will now classify the above-mentioned DDUs as follows.

\begin{itemize}
    \item \textbf{Product yield DDU}: For product yield DDU, the yield is related to the decision variable of different production line combinations, but the decision maker cannot estimate the uncertainty of the equipment in each combination. Therefore, it can only be described by the rough boundary, which belongs to the ambiguous set DDU.
    \item \textbf{Frequency regulation penalty DDU}: The DDU of the frequency regulation penalty function is related to the frequency regulation demand from the power utility, which is unknown to the factory. Meanwhile, decision-makers can estimate this expectation value from historical data.
    \item \textbf{Product Structure DDU}: Decision-makers can estimate the tendencies for different products based on historical order information. However, the historical information includes many factors such as production line combination and specific equipment selection, so this DDU needs to use a multivariate distribution model. Without loss of generality, the imprecise Dirichlet model is selected in this paper.
\end{itemize}

\subsection{Definition and Simplification of DDUs}
\subsubsection{\textbf{Product Yield DDU}}
As mentioned above, the combination of production lines will affect the yield of a certain product. For example, if a specific combination of production lines includes additional overhaul processes for the same equipment, then product yields will increase. We redefine the $\alpha^h_{n,p}$ as follows.
\begin{subequations}
\begin{gather}
\alpha^h_{n,p} \ge \underline{\alpha}_{n,p} -  d_{n,p}(I^h_{n,p}), p \in \mathcal{P}   \label{8a}\\
d^h_{n,p}=\sum\nolimits_{k \in \mathcal{K}}I^{k,h}\Delta\alpha_k   \label{8b}\\
I^{k,h}= \prod\nolimits_{k_i\in k}I^h_{n,k_{1}}I^h_{n,k_{2}}...I^h_{n,k_{N}}   \label{8c}\\
G^h_n=\sum\nolimits_{p=0}^P I_{n,p}^h g_{n,p}\alpha_{n,p}^h   \label{8d}
\end{gather}
\end{subequations}

In (\ref{8a}), the correction parameter $d_{n,p}$ is denoted as the combination of production lines, see (\ref{8b}). $\underline{\alpha}_{n,p}$ is the set lower bound. The set $\mathcal{P}$ is the set of equipment to be corrected, and the set $\mathcal{K}$ is the combination of production lines that will affect the finished product rate.

\subsubsection{\textbf{Frequency Regulation Penalty DDU}}
For $E_{ex}^h$ in the frequency regulation term of the objective function, we will first denote the general frequency regulation distance as follows. $\mathfrak{D}$ denotes the general distance measure function, and we use L-1 norm $\mathcal{D}$ here.
\begin{equation}
E_{fr}^h \ge \mathcal{D} (\tilde{E}^h-E^h_{ex})
\end{equation}

The historical expectation data is represented by the Gaussian distribution as (\ref{10a})-(\ref{10c}). Suppose the dataset for time $h$ contains $T$ samples.
\begin{subequations}
\begin{gather}
\mu_h=\frac{1}{T}\sum\nolimits_{n=1}^T E_{ex,n}^h   \label{10a}\\
\sigma_h^2=\frac{1}{T}\sum\nolimits_{n=1}^T (E_{exp,n}^h-\mu_h)^2   \label{10b}\\
\mathcal{R}_1 =  \begin{Bmatrix} 
  &  \mathbb{E}_{\mathbb{P}_r}[E^h_{fr}]=\mu_h \\
 \mathbb{P}_r \in \mathcal{P}(\mathbb{R}): & \\
  & \mathbb{E}_{\mathbb{P}_r}[(E^h_{fr}-\mu^h)^2]=\sigma^2_h
\end{Bmatrix}   \label{10c}
\end{gather}
\end{subequations}

In the above formula, $\mu_h$ and $\sigma_h$ represent the mean and standard deviation. For the sake of reducing penalties, we hope the actual electricity consumption $\tilde{E}^h$ also satisfies the Gaussian distribution but includes errors due to power consumption fluctuation. We're going to modify the Gaussian distribution as follows. 
\begin{subequations} \label{11}
\begin{gather}
\mu_h(\tilde{E}^h) = \mu_h-(K\cdot \tilde {E}^h+B)\\
\mathcal{R}_2 =  \begin{Bmatrix}
  &  (\mathbb{E}_{\mathbb{P}_r}[E^h_{fr}]-\mu_h(\tilde{E}^h))^2 \cdot \sigma_h^{-1} \le\gamma_1 \\
 \mathbb{P}_r \in \mathcal{P}(\mathbb{R}): & \\
  & \mathbb{E}_{\mathbb{P}_r}[(E^h_{fr}-\mu^h)^2] \le \gamma_2\cdot \sigma_h
\end{Bmatrix}
\end{gather}
\end{subequations}

The probability distribution inscription has been revised as (\ref{11}). After obtaining $\mathcal{R}_2$, a linearization step is needed for simplification. We first rewrite the constraint in the form (\ref{12}).
\begin{equation}\label{12}
    \inf_{\mathbb{P}_r \in \mathcal{R}_2} \mathbb{P}_r(E^h_{fr} \le C) \ge 1-\epsilon 
\end{equation}

After that, based on the methodology in \cite{abs}, we first introduce variables $\tilde{s} = E^h_{fr}-\mu_h(\tilde{E}^h)$ and $b = C-\mu_h(\tilde{E}^h)$, and employ the corresponding sets.
\begin{subequations}
\begin{gather}
S=\{ (\mu_1,\sigma_1):|\mu_1| \le \sqrt[]{\gamma_1\sigma_h},\mu_1^2+\sigma_1^2 \le \gamma_2\sigma _h  \}
\\
\mathcal{R}_{2}=\begin{Bmatrix}
  &  |\mathbb{E}_{\mathbb{P}_r}[\tilde{s}]| \le \sqrt{\gamma_1}\sqrt{\sigma_h}  \\
 \mathbb{P}_r \in \mathcal{P}(\mathbb{R}): & \\
  & \mathbb{E}_{\mathbb{P}_r}[\tilde{s}^2] \le \gamma_2\cdot \sigma_h
\end{Bmatrix}
\end{gather}
\end{subequations}

Then it follows that.
\begin{multline}\label{14}
\inf_{\mathbb{P}_r \in \mathcal{R}_2} \mathbb{P}_r(E^h_{fr} \le C) =\inf_{\mathbb{P}_r \in \mathcal{D}_{\tilde{s}}}\mathbb{P}_r\{ \tilde{s} \le b\} =\\ 
\inf_{(\mu_1, \sigma_1) \in S} \inf_{\mathbb{P}_r \in \mathcal{R}_1} \mathbb{P}_r\{\tilde{s} \le b\} 
\end{multline}

We decompose the DDU constraint into a two-layer optimization problem as (\ref{14}). For the optimization problem in the outer layer, we search for the worst-case condition, while for the inner layer, we search for the worst-case distribution bound by the given variance and mean value. Then, Cantelli's inequality is used to rewrite the constraint \cite{abs}.
\begin{equation}
\inf_{\mathbb{P}_r \in \mathcal{R}_2} \mathbb{P}_r(E^h_{fr} \le C) \le \inf_{(\mu_{1},\sigma_{1})\in S }\frac{(b-\mu_{0,h})^2}{\sigma_{0,h}^2+(b-\mu_{0,h})^2}
\end{equation}

As $\mathbb {P} _r$ for normal distribution and parameters in $S$ are known, As a result, if we present $\mathbb{P}_r $'s mean and variance as $\mu_{\mathbb{P}_r}$ and $\sigma_{\mathbb{P}_r}$, then the constraint can be changed to a linear form.
\begin{equation}\label{16}
    E^h_{fr} - \mu_{\mathbb{P}_r} - \sigma_{\mathbb{P}_r}\Phi^{-1}(\frac{(b-\mu_{0,h})^2}{\sigma_{0,h}^2+(b-\mu_{0,h})^2}) \le 0
\end{equation}

In (\ref{16}), $\Phi$ is the cumulative distribution function (CDF) of $\mathbb{P}_r$. Since $\mathcal{D}$ is the L-1 norm, we can substitute the decision variables into the final optimization constraints.
\begin{subequations}\label{17}
\begin{gather}
-\Delta E^h + E_{ex}^h \le \tilde{E}^h \le \Delta E^h + E_{ex}^h  \\
\Delta E^h = \mathcal{D}^{-1}\left[\mu_{\mathbb{P}_r} - \sigma_{\mathbb{P}_r}\Phi^{-1}(\frac{(b-\mu_{0,h})^2}{\sigma_{0,h}^2+(b-\mu_{0,h})^2})\right]
\end{gather}
\end{subequations}

In (\ref{17}), $\mathcal{D}^{-1}$ is the inverse function of the distance measure function. Since the distance measure function is known and the value in $\mathcal{D}^{-1}$ is known from historical data, (17a) is a solvable linear constraint.

\subsubsection{\textbf{Products Structure DDU}}
Since different products have different margins, the operator can make trade-offs between them. We model this tendency to produce different products as $\zeta^h$. And $\zeta^h$ is a multivariate distribution DDU. Since this is partially known to us from historical data on sales of different products, we use a probability distribution to inscribe it as follows. 
\begin{subequations}
\begin{gather}
D = \{ LC_1^h,LC_2^h,...,LC_K^h \}   \label{18a}\\
W = \{ w_1^h,w_2^h,...,w_K^h \}   \label{18b}\\
\mathbb{P}_r (w_i^h) =\mathbb{P}_r (LC_i^h)= \theta_i^h, i=1,2,...,K   \label{18c}\\
f(\theta^h)= \Gamma(s)[ \prod\nolimits_{i=1}^{n}\Gamma(s\cdot r_i)]^{-1}\prod\nolimits_{i=1}^{n}\theta_i^{s\cdot r_i-1}    \label{18d}\\
\sum\nolimits_{i=1}^nr_i=1, \quad \forall r_i  \in [0,1]    \label{18e}
\end{gather}
\end{subequations}

Suppose we get a set of historical data in a finite state space and let its corresponding probability be (\ref{18c}). Assuming that there are $K$ combinations of production line states in the past historical data, denoted by $LC_i^h$, and $w_i^h$ represents a set of by-product sales to main product ratio corresponding to the $i$-th combination of production lines (obtained from the average value of the historical data), the larger $w_i^h$ indicates the higher tendency to by-products. Then we can use the Imprecise Dirichlet model (IDM) to inscribe it as (\ref{18d}) \cite{idm}. Where $\Gamma, r_i^h, \theta_i^h, s$ represent the Gamma function, prior weight factor of the $i$-th state, probability of the $i$-th state, and equivalent sample size, which is usually considered to be 1.

According to Bayesian theory, the posterior of $\theta$ belongs to the Dirichlet distribution. Therefore, its posterior probability can be represented as (\ref{19}). 
\begin{subequations}\label{19}
\begin{gather}
N^h = n_1^h+n_2^h+...+n_K^h  \label{19a}\\
n_i = n_{i,\text{HD}}^h+n_{i,\text{RT}}^h  \label{19b}\\
f(\theta^h|N)= \Gamma(s+N^h)\left[ \prod_{i=1}^{n}\Gamma(s\cdot r_i+n_i)\right]^{-1}\prod_{i=1}^{n}\theta_i^{s\cdot r_i+n_i-1} \label{19c}
\end{gather}
\end{subequations}

In (\ref{19c}), the parameter $N$ represents the total number of observations, including the real-time production status at moment $h$ and the data at moment $h$ of history. And $m_i$ represents the number of times state $i$, $n_{i,\text{HD}}$ and $n_{i,\text{RT}}$ represent the number of times state $i$ is in the historical production process and real-time production process, respectively. Based on the above posterior density function set, then the uncertainty interval of the probability can be expressed, see (\ref{20}).
\begin{subequations}\label{20}
\begin{gather}
    \zeta^h= \sum\nolimits_{i=1}^K\theta_i^h   \label{20a} \\
    \theta _i^h = [\underline{E}(\theta_i^h),\bar{E}(\theta_i^h)]=\left[\frac{m_i}{s+K},\frac{m_i+s}{s+K}\right]   \label{20b}
\end{gather}
\end{subequations}

We need to calibrate this uncertainty set. With a certain confidence $\gamma$, we denote the intervals in (\ref{20b}) as the following upper and lower bounds as (\ref{21}).
\begin{equation}\label{21}
\begin{cases}
\underline{\theta}^h=0,\bar{\theta}^h=G^{-1}(\frac{1+\gamma}{2}),n_i^h=0 \\
\underline{\theta}^h=H^{-1}(\frac{1-\gamma}{2}),\bar{\theta}^h=G^{-1}(\frac{1+\gamma}{2}),0 \le n_i^h \le N^h \\
\underline{\theta}^h=H^{-1}(\frac{1-\gamma}{2}),\bar{\theta}^h=1, n_i^h = N^h
\end{cases}
\end{equation}

In (\ref{21}), $\gamma, H, G, I=|\underline{\theta}^h,\bar{\theta}^h|, \bar{\theta}^h and \underline{\theta}^h$ represent the confidence coefficient, CDF of $B(m_i^h,s+N^h-m_i^h)$, CDF of $B(m_i^h+s, N^h-m_i^h)$, $\gamma$ confidence bands for $\theta_i^h$ of the CDF of the real distribution, and the historical information, respectively. 

So we can write the constraints on $\zeta^h$ as (\ref{22}). More historical data will bring narrower probability intervals and make the optimization effect closer to the real \cite{idms}.
\begin{subequations}\label{22}
\begin{gather}
\zeta^h= \sum\nolimits_i^K\theta_i^h(I^h_{n,p})   \label{22a}\\
\underline{\theta}^h \le \theta _i^h(I^h_{n,p}) \le \bar{\theta}^h    \label{22b}
\end{gather}
\end{subequations}

It is worth noting that the upper and lower bounds of theta in (\ref{22b}) will be updated by the value $n_i^h$ obtained in each iteration following (\ref{21}), but it always maintains linear constraints within the same loop.

\section{Problem Transformation \& Algorithm Design} \label{Section4}
In this section, we will derive the final form of the optimization problem and divide it into the corresponding master problem (\textbf{MP}) and sub-problem (\textbf{SP}). After that, we propose a new algorithm for solving this model. We give a theoretical analysis of both the convergence and algorithmic complexity of the new algorithm.

\subsection{Problem Transformation}
Eventually, we get the final form of the optimization problem. Our optimization problem satisfies the following Assumptions 1.
\begin{multline} \label{23}
    \min_{I^h_{n,p}}\sum\nolimits_{h=0}^H\tilde {T}^h+f_B(S^h) + \sum\nolimits_{h=0}^H\tilde{P}^h
    \\+ \max_{\text{DDU,DIU}} \min_{E^h_{EU},E^h_{LU}} B_M^H s_m-\zeta^h \sum\nolimits_{h=0}^HB^h_{ss}s_s \\
    + \sum\nolimits_{h=0}^H E_{fr}^h
\end{multline}
\begin{equation*}
   s.t. (2)-(7),(8a),(17),(22)
\end{equation*}

\begin{definition}\label{ass1}
    The objective function and constraints in the \textbf{MP} and \textbf{SP} of the two-stage problem are in linear form. The variable types are 0-1 integer variables or continuous variables in $\mathbb{R}_+$.
\end{definition}

Our problem setup satisfies Definition \ref{ass1}. We can divide the problem (\ref{23}) into the master problem (\textbf{MP}) and a sub-problem (\textbf{SP}). The decision variables of \textbf{MP} are $X:= I^h_{n,p}$, and the decision variables of \textbf{SP} are $Y:=\left( E_{EU}^h,E_{LU}^h \right)$ and related variables. Such a partition is reasonable, the variables for \textbf{MP} are the production decision variables while the \textbf{SP} parts are energy dispatch variables. 
\begin{align*}
    \textbf{\text{MP}}:\min_{I^h_{n,p},\psi} & \sum\nolimits_{h=0}^H\tilde {T}^h + \psi \\
    \text{s.t.} \quad & (2)-(6) & \\
    & \psi \ge \textbf{\text{SP}}(I^h_{n,p}) & \\
    & (8a): \underline{\alpha}_{n,p} -  d_{n,p}(I^h_{n,p}) \le \alpha^h_{n,p}, \quad p \in \mathcal{P}  & \\
    & (22a): \zeta^h= {\textstyle \sum_{i}^{M}} \theta_i^h(I^h_{n,p}) & 
\end{align*}
\begin{align*}
    \quad \textbf{\text{SP}}(I^h_{n,p}): \max_{\text{DDU, DIU}} & \min_{E^h_{EU},E^h_{LU}}   \sum\nolimits_{h=0}^H E_{fr}^h \\
    & +f_B(S^h) + \sum\nolimits_{h=0}^H\tilde{P}^h \\
    & \qquad +B_M^H\cdot s_m-\zeta^h \sum\nolimits_{h=0}^HB^h_{ss}\cdot  s_s \\
     \text{s.t.} \quad (7)\qquad \quad & \\
     (17a): -\Delta & E + E_{ex}^h \le \tilde{E}^h \le \Delta E^h + E_{ex}^h &
\end{align*}

The above problem is a two-stage robust problem with DDUs and the general form of such problem can be summarized as follows.
\begin{gather}\label{24}
    \Upsilon \triangleq \{X, Y, W, f, G\}
\end{gather}

In (\ref{24}) $X$ represents the \textbf{MP} decision variable, $Y$ represents the \textbf{SP} decision variable, $W$ represents the uncertainty variable, $f$ is the objective function, and $G$ is a mapping function between decision variable sets and decision-dependent uncertainty set. Furthermore, in our decision-dependent two-stage robust optimization problem, we assume our model satisfies the following Assumption. 

\begin{definition}\label{ass2}
    (\textit{DDU Independence}) For a deterministic DDU variable, its decision-relevant part is designed with only one decision variable, i.e., $X$ or $Y$.
\end{definition}

\begin{definition}\label{ass3}
    (\textit{DDU Separability}) Without loss of generality, the DDU set is said to be separable if it can be expressed as $W(x)=\{u=C(\xi,x) | \xi \in \Xi \}, \forall x \in X$. Where $\xi$ is an auxiliary random variable, and its uncertainty set $\Xi$ is irrelevant to the decision-making process and is a support set that is independent of the decision variable. $C(\cdot): \Xi \times X \rightarrow U$ is a coupling function with $\xi$ and $x$ as input parameters and $U$ as value range.
\end{definition}

Our problem setup satisfies Definition \ref{ass2} and Definition \ref{ass3}. According to Definition \ref{ass2}, the set of DDUs $W$ can be categorized into $W_X$ and $W_Y$. Furthermore, in accordance with Definition \ref{ass3}, both $W_X$ and $W_Y$ can be partitioned into decision-relevant deterministic expressions and DIU parts. We rewrite the problem $\Upsilon$ as follows. The DDU variable separation in Definition \ref{ass2} can be implemented using the linearization method described in Section \ref{Section3}.
\begin{gather}\label{25}
    \Upsilon^{\text{DDU}} \triangleq \{X, Y, W_X, W_Y, f, G_x, G_y \}
\end{gather}

\begin{algorithm}[!t]
	\caption{DDCCG Algorithm}
	\begin{algorithmic}[1]
		\STATE Set \textit{LB}=$-\infty$, \textit{UB}=$+\infty$, k=0, $\mathcal{O}=\emptyset$ \\
        \STATE Solve the master problem.
               \begin{flalign}
               &\textbf{\text{MP}}:\min_{I^h_{n,p},\psi}\sum\nolimits_{h=0}^H\tilde {T}^h+\psi \nonumber &&\\
               &s.t. \quad (2)-(6) \nonumber \\
               & \qquad \psi \ge \textbf{\text{SP}}(I^h_{n,p},E^h_{EU,l},E^h_{LU,l}), \quad \forall l \in \mathcal{O} \nonumber \\
               & \qquad \underline{\alpha}_{n,p} - d_{n,p}(I^h_{n,p}) \le \alpha^h_{n,p,l}, \quad p \in P,  \forall l \le k  \nonumber \\
               & \qquad \zeta^h_l= {\textstyle \sum_{i}^{M}} \theta_i^h(I^h_{n,p}), \quad \forall l \le k \nonumber \\
               & \qquad \tilde{E}^{h}=E^h \cdot e_{n,p}-E_{EU,l}^{h}-E_{LU,l}^{h}, \forall l \le k \nonumber
               \end{flalign}
               Derive an optimal solution $(I^{h*}_{n,p,k+1},\psi^*_{k+1})$. \\
               Update \textit{LB}= $\textbf{\text{MP}}(I^{h*}_{n,p,k+1},\psi^*_{k+1})$. \\
        \STATE Equate the $W_X$ type DDU constraint with $I^{h*}_{n,p,k+1}$.
                \begin{flalign}
                    &\underline{\alpha}_{n,p} - d_{n,p}(I^{h*}_{n,p,k+1}) = \alpha^h_{n,p,k+1}, \quad p \in P \nonumber &&\\
                    &\zeta^h_{k+1}= {\textstyle \sum_{i}^{M}} \theta_i^h(I^{h*}_{n,p,k+1}) \nonumber
                \end{flalign}
        \STATE Call the oracle to solve $\textbf{\text{SP}}(I^{h*}_{n,p,k+1}, \alpha^h_{n,p,k+1}, \zeta^h_{k+1})$
                \begin{flalign}
                    & \mathop{\max}\nolimits_{E_{fr}^h} \mathop{\min} \nolimits_{E^h_{EU},E^h_{LU}} \sum\nolimits_{h=0}^H E_{fr}^h \nonumber && \\
                    & \qquad\qquad\qquad+f_B(S^h) +\sum\nolimits_{h=0}^H\tilde{P}^h \nonumber\\
                    & \qquad\qquad\qquad\qquad\qquad+s_mB_M^H -\zeta^h \sum\nolimits_{h=0}^Hs_sB^h_{ss} \nonumber\\
                    & s.t. \quad (7a)-(7g), (17a) \nonumber\\
                    & \qquad E^h = \sum\nolimits_{n=1}^{N}\sum\nolimits_{p=1}^{P}I^{h*}_{n,p,k+1} \cdot e_{n,p} \nonumber\\
                    & \qquad\tilde{E}^h - E(E^h_{EU},E^h_{LU}) \le E^h_{ex} \nonumber
                \end{flalign}
               Derive $\textbf{\text{SP}}^* = \textbf{\text{SP}}^*(I^{h*}_{n,p,k+1}, \alpha^h_{n,p,k+1}, \zeta^h_{k+1})$ \\
               Update \textit{UB}=$\min\{\text{\textit{UB}}, \text{\textit{LB}}- \psi^*_{k+1} + \textbf{\text{SP}}^*\}$. \\
        \STATE if \textit{UB}-\textit{LB}$\le \epsilon$, return $x^*_{k+1}$ and terminate. Otherwise, do: \\
        (a) if $\textbf{\text{SP}}^* \le +\infty$, then: \\
        \qquad Create $E_{EU,k+1}^{h}, E_{LU,k+1}^{h}$, add following constraints.
        \begin{flalign}
            & \qquad \psi \ge \textbf{\text{SP}}(I^h_{n,p},E^h_{EU,k+1},E^h_{LU,k+1}) \nonumber && \\
            & \qquad \tilde{E}^{h}=E^h-E_{EU,k+1}^{h}-E_{LU,k+1}^{h} \nonumber
        \end{flalign}
        \qquad to \textbf{MP} and update $k=k+1,\mathcal{O}=\mathcal{O} \cup \{k+1\}$ and go to Step 2.\\
        (b) if $\textbf{\text{SP}}^* =  +\infty$, then: \\
        \qquad Create $E_{EU,k+1}^{h}, E_{LU,k+1}^{h}$, add following constraints.
        \begin{flalign}
            & \qquad \tilde{E}^{h}=E^h-E_{EU,k+1}^{h}-E_{LU,k+1}^{h} \nonumber && 
        \end{flalign}
        \qquad to \textbf{MP} and update $k=k+1$ and go to Step 2. \\
	\end{algorithmic}  
        \label{algo2}
\end{algorithm}

In (\ref{25}), we also introduce an updated version of the mapping function $\{G_x\colon X\times Y\rightarrow W_X\}$ and $\{G_y\colon Y\rightarrow W_Y\}$. In our work, $I^h_{n,p}$ is the \textbf{MP} decision variable $X$, while $E^h_{EU}$ and $E^h_{LU}$ are \textbf{SP} decision variable $Y$s. The DDU variable $E_{fr}^h$ is associated with $E^h_{EU}$ and $E^h_{LU}$, while DDUs $(\alpha^h_{n,p}, \zeta^h)$ are related to $I^h_{n,p}$. Therefore, $(\alpha^h_{n,p}, \zeta^h)$ belongs to $W_X$ and $E_{fr}^h$ belongs to $W_Y$. 

After linearization, we obtain the mixed integer programming problem of (\ref{23}) (Definition \ref{ass1}), where the objective functions and constraints in \textbf{MP} and \textbf{SP} are linear in form. Since $Y$ is linear concerning $X$ and $Y$ and $G_y$ is a one-to-one mapping, we consider $Y$ as a mapping for $X$ and $W_Y$.

At this juncture, if $W_X$ is an empty set, the problem degenerates into a normal two-stage robust optimization problem, which can be solved using the traditional algorithm. Among the traditional algorithms, the C\&CG algorithm is selected for solving the problem due to its rapid convergence rate and adaptability to the requirements of modern industry. However, given that $W_x$ is not an empty set and traditional algorithms such as C\&CG are not suitable for $W_X$-type DDU optimization problems, the development of new algorithms is necessary. To this end, we propose an \textbf{D}ecision-\textbf{D}ependent \textbf{C}olumn-and-\textbf{C}onstraint \textbf{G}eneration (\textbf{DDCCG}) algorithm. In our proposed algorithm, an additional step is designed in our new algorithm to gradually generate constraints on $W_X$. These constraints are generated by adding Benders cutting planes. The analysis of the convergence and optimality of the algorithm is presented in the next subsection.

The DDCCG algorithm is an iterative algorithm that solves decision-dependent two-stage robust optimization problems. The core idea of the algorithm is to gradually generate and add constraints to approach the optimal solution.

\begin{itemize}
    \item \textbf{Step 1}: The algorithm starts by initializing the parameters.
    \item \textbf{Step 2}: Step 2 first obtains the decision variables of the first stage.
    \item \textbf{Step 3}: Step 3 Based on the $X$ derived from Step 1, we tighten $W_X$ type DDU constraint, obtain the relaxed $W_X$ variable value.
    \item \textbf{Step 4}: Subsequently, Step 4 solves the sub-problem based on the obtained $X$ and $W_X$ to obtain the current optimal $W_Y$ and $Y$.
    \item \textbf{Step 5}: Step 5: The iterative process continues until the termination condition is met. Otherwise, similar to the C\&CG algorithm, variables (columns) and constraints are generated and added to the \textbf{MP}.
\end{itemize}

\subsection{Algorithm Performance Analysis}
Under Definition \ref{ass2} and Definition \ref{ass3}, our co-scheduling problem is a two-stage robust optimization model with an additional $W_X$-type DDU computation session. This section will analyze the convergence and algorithmic complexity of the DDCCG algorithm.

\subsubsection{Convergence}
For the convergence of this problem, we give the following Proposition \ref{prop1}.

\begin{proposition}\label{prop1}
    If a decision-dependent two-stage robust optimization problem with both $W_X$ and $W_Y$ types of DDU variables satisfies Definition \ref{ass1}, the DDCCG algorithm can converge in finite steps.
\end{proposition}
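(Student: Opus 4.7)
The plan is to adapt the standard convergence proof of the classical C\&CG algorithm by exploiting the finite cardinality of the first-stage decision space together with the deterministic collapse of the $W_X$ side that the new Step~3 provides. Under Definition~\ref{ass1}, the master decision variable $X = I^h_{n,p}$ is a 0-1 vector of fixed dimension (at most $NPH$), so the feasible region for $X$ contains finitely many points, bounded by $2^{NPH}$. Combined with Definitions~\ref{ass2} and~\ref{ass3}, each feasible $X$ determines a unique realization of the $W_X$-type DDU $(\alpha^h_{n,p}, \zeta^h)$ via the coupling $G_x$ evaluated in Step~3; thus the $W_X$ branch of the scenario space is finite and indexed by $X$.

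I would then verify monotonicity of the two bounds. The $\textbf{MP}$ is a relaxation that enforces scenario-specific $\psi$-constraints only for indices in $\mathcal{O}$, and $\mathcal{O}$ is strictly enlarged (or the algorithm terminates) at every iteration, so $\{\textit{LB}_k\}$ is non-decreasing. The $\textit{UB}$ is updated as a running minimum over objective values obtained from Step~4, each of which corresponds to a feasible solution of the original problem, hence $\{\textit{UB}_k\}$ is non-increasing. Both sequences are bounded, so they converge, and it suffices to show the gap closes in finitely many iterations.

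For finite termination I would argue by scenario exhaustion. Under Definition~\ref{ass1} the $W_Y$ uncertainty set is polyhedral, so its extreme points form a finite set; together with the finite $W_X$ catalog from above, the pool of distinct $(X, W_X, W_Y)$ tuples the oracle can return from Step~4 is finite. If the algorithm fails to terminate, each iteration must produce a genuinely new tuple, because revisiting a previously-seen one would mean the corresponding $\psi$-cut already present in $\textbf{MP}$ forces $\psi^*_{k+1} = \textbf{SP}^*_{k+1}$, hence $\textit{UB}_{k+1} - \textit{LB}_{k+1} \le \epsilon$ and the stopping criterion in Step~5 is triggered. Finiteness of the pool then caps the number of iterations.

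The main obstacle I expect is rigorously handling the interaction between Step~3, which re-parameterizes $W_X$ with the current $X$ iterate, and the outer cut-generation loop that classical C\&CG convergence analyses treat over a static uncertainty set. In the classical setting the extreme points of the uncertainty set are enumerated once and for all, whereas here the $W_X$ values move with $X$ through $G_x$. I would formalize this by taking the scenario index to be the pair (feasible $X$, extreme point of $W_Y$), showing that each such pair can trigger at most one new column-and-constraint generation, and bounding the total iteration count by $2^{NPH}$ times the number of extreme points of $W_Y$. This reduces the DDU-augmented setting to a finite enumeration reminiscent of the standard proof, with the Step~3 tightening acting as a deterministic parametric update rather than an additional optimization layer.
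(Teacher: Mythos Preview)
Your proposal is correct and follows essentially the same route as the paper: both arguments exploit the finiteness of the 0-1 first-stage set $X$, the deterministic collapse of $W_X$ once $X$ is fixed (Step~3), and the finiteness of the extreme elements of the polyhedral $W_Y$ set to conclude that only finitely many distinct cuts can ever be generated. The paper frames the inner part explicitly in Benders language (extreme points for optimality cuts, extreme rays for feasibility cuts) and thereby also covers the unbounded branch of Step~5(b), which your scenario-exhaustion argument does not address explicitly; adding a sentence that extreme rays are likewise finite so Step~5(b) can fire only finitely often would make your argument complete in the same way.
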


\begin{proof}
    In each step, the feasible set of $Y$ is under the influence of $X^*_{k+1}$ and $W_{X,k+1}$ derived from Step 2-3. Under Definition \ref{ass2} and Definition \ref{ass3}, the feasible set of the \textbf{MP} variable $X$ is a finite number $N_x$ of discrete points in a polyhedron, and thus the feasible set of the \textbf{SP} variable $Y$ is a set of polyhedra. Since the feasible solutions of $X$ are finite, the feasible set of $Y$ in the optimal condition is a single polyhedron.

It thus suffices to demonstrate whether the optimal solution of $Y$ considering WY for a decision-dependent problem can be solved in finite steps. That is to say, i.e., to prove the convergence of Steps 4-5.

In our model, the $W_Y$ type DDU constraints are expressed as follows.
\begin{flalign}
(17a): -\Delta E^h + E_{ex}^h \le \tilde{E}^h \le \Delta E^h + E_{ex}^h \nonumber
\end{flalign}

Under Definition \ref{ass1}, We have $N_x$ polyhedra forming the feasible set for $Y$. The definition of a single polyhedron is (17a). For each polyhedron, we denote all the extreme points $W_Y^P \triangleq \{ w_{Y1}^P,...,w_{YI}^P \}$ and extreme rays $W_Y^R \triangleq \{ w_{Y1}^R,...,w_{YJ}^R \}$ for $W_Y=E_{fr}^h$. $I$ and $J$ are finite.

Extreme rays and extreme points correspond to the requirements for solvability and optimality of the problem respectively \cite{benders}.
\begin{subequations}
\begin{gather}
(E_{fr}^h)^R_j \ge \mathcal{D} (\tilde{E}^h-E^h_{ex}), \quad \forall j =1,...,J\\
\psi \ge \textbf{\text{SP}}(I^{h*}_{n,p}, (E_{fr}^h)^P_i), \quad \forall i =1,...I
\end{gather}
\end{subequations}

Our algorithm, in the \textbf{MP} stage, produces a set of candidate optimal solutions $(I^{h*}_{n,p}, \alpha^{h*}_{n,p}, \zeta^{h*})$) and potential \textbf{SP} optimal value $\textbf{\text{SP}}^*$ represented by $\psi$, and subtitues them into \textbf{SP} problem to prodece $\textbf{\text{SP}}(I^{h*}_{n,p})$. At this point, we face the following three situations.

\begin{itemize}
    \item $\textbf{\text{SP}}(I^{h*}_{n,p}) = \textbf{\text{SP}}^*$: The algorithm stops.
    \item $\textbf{\text{SP}}(I^{h*}_{n,p}) \ne \textbf{\text{SP}}^*$:
        \begin{itemize}
            \item \textbf{$\textbf{\text{SP}}(I^{h*}_{n,p})$ is unbounded}: Additional constraints on $W_X$ type DDU are generated and added to \textbf{MP}. These kinds of constraints are called ``Benders feasibility cuts" because they enforce necessary conditions for the feasibility of the \textbf{SP}
            \item \textbf{$\textbf{\text{SP}}(I^{h*}_{n,p}) > \textbf{\text{SP}}^*$}: Then in addition to such ``Benders feasibility cuts" constraints, $\eta$ related constraints will be added, and they are called ``Benders optimality cuts" because they are based on the optimal case of \textbf{SP}.
        \end{itemize}
\end{itemize}

Since both $I$ and $J$ are finite, this algorithm must converge to an optimal solution in a finite number of steps.

The convergence process of the algorithm is shown in Figure \ref{fig3}. It can be seen that by continuously equating the DDU variable $W_X$ and adding constraints, the feasible solution boundary is continuously compressed until $X^*, W^{*}_Y, Y^*$ on the boundary coincides with the optimal point.
\end{proof}

\subsubsection{Complexity Analysis}
A more precise estimation of the algorithmic complexity will be provided herewith. First, the optimization problem in this paper satisfies Assumption \ref{ass4}.

\begin{assumption}\label{ass4}
    The decision-dependent two-stage robust optimization model has the relatively complete recourse property and the strong duality property.
\end{assumption}

Under the Assumption \ref{ass4}, the following Proposition \ref{prop2} is therefore put forth.

\begin{proposition}\label{prop2}
    Let $N_x$ be the number of discrete points in the feasible set of $X$. For each \textbf{SP} for $W_Y$ and $Y$ under a certain $X$, let $N_{W_Y}$ be the number of extreme points of $W_Y$ if it is a polyhedron. Then, the DDCCG algorithm will converge to the optimal value in $O\left(N_xN_{W_Y}\right)$ iterations.
\end{proposition}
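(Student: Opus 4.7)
The plan is to refine the extreme-point counting argument already used in the proof of Proposition \ref{prop1} into a quantitative bound, exploiting Assumption \ref{ass4} to rule out the feasibility branch of the algorithm. Under Definitions \ref{ass1}--\ref{ass3}, the master variable $X=I^h_{n,p}$ takes values in a finite $0$--$1$ set of cardinality at most $N_x$, and for every fixed $X$ the sub-problem's worst-case polyhedron $W_Y$ has at most $N_{W_Y}$ extreme points. The target is to show that each iteration corresponds to a distinct pair $(X,w_Y^P)$, so the total iteration count is bounded by the product.

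First, I would invoke Assumption \ref{ass4}. Relatively complete recourse guarantees that $\textbf{SP}(I^{h*}_{n,p,k+1},\alpha^h_{n,p,k+1},\zeta^h_{k+1})$ is always feasible once Step 3 has pinned down the $W_X$-type DDU, so the algorithm never enters branch 5(b); only ``Benders optimality cuts'' are appended to \textbf{MP}. Strong duality then lets me dualize the inner minimization over $Y$ and fold it into the outer maximization, turning each oracle call in Step 4 into a single linear program that maximizes a linear functional over $W_Y$. By the fundamental theorem of linear programming, the maximizer lies at one of the $N_{W_Y}$ extreme points $\{w_{Y1}^P,\dots,w_{YN_{W_Y}}^P\}$.

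Next, I would count iterations per master solution. Step 3 deterministically maps $X^*_{k+1}$ to the same $(\alpha^h_{n,p,k+1},\zeta^h_{k+1})$ every time that $X$ reappears, so the oracle solves the same LP and returns one extreme point from a fixed list. If a returned extreme point already indexes a cut present in \textbf{MP}, then $\psi^*_{k+1}$ already equals the oracle's value, forcing $UB=LB$ at Step 5 and triggering termination. Consequently each distinct $X$ can contribute at most $N_{W_Y}$ iterations before the algorithm either halts or the master problem switches to a new $X$. Multiplying by the $N_x$ bound on distinct master solutions yields the claimed $O(N_x N_{W_Y})$ complexity.

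The hard part will be justifying the ``no duplicated $(X,w_Y^P)$ pair'' step cleanly: strong duality must be used to identify the optimality cut generated at extreme point $w_Y^P$ with the value $\textbf{SP}^*(X)$, so that duplication would leave $\psi$ unchanged and force immediate termination. A secondary subtlety is that $W_Y$ technically depends on $X$ through constraint (17a), since $\Delta E^h$ enters via $\tilde{E}^h$; here the separability of Definition \ref{ass3} together with Assumption \ref{ass4} allows me to treat $N_{W_Y}$ as a uniform upper bound across all $X$, which is precisely the regime assumed in the statement of the proposition.
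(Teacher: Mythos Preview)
Your proposal is correct and follows essentially the same route as the paper. Both arguments (i) use Assumption~\ref{ass4} to eliminate the infeasibility branch, (ii) dualize the inner recourse problem so that the oracle in Step~4 selects an extreme point of the decision-independent support polyhedron, and (iii) conclude by showing that a repeated extreme point forces $LB=UB$; the paper carries this out by first freezing $X$ (removing Step~3) and writing out the enumerated master program explicitly before proving the ``repeated $u^*$ implies termination'' step, whereas you count pairs $(X,w_Y^P)$ directly, but the substance is identical.
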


\begin{proof}
The value of $W_X$ is related to the feasible set of X and contains the $N_x$ case. Therefore, proving Proposition \ref{prop2} is equivalent to proving that the complexity of solving the two-stage robust optimization problem without $W_X$ (remove Step 3) is $O(N_{W_Y})$.

Without loss of generality, we give the general form of a decision-dependent two-stage robust optimization problem model (containing only $W_Y$) under a certain $W_X$. We \textbf{SP}lit $W_Y$ into the expression associated with $Y$ and the DIU variable $\mathcal{U}$ according to linearization. $\mathcal{U}$ here is a polyhedron.
\begin{align}
    \min_xc^Tx & +\max_{u \in \mathcal{U}} \min_{y} b^Ty \nonumber\\
    \text{s.t.} \quad & Ax \ge d  & \nonumber\\
    & Gy \le Ex & \\
    & By \le e  & \nonumber\\
    & Hy \ge k-Mu & \nonumber \\
    & x\in S_x, y \in S_y \nonumber
\end{align}    

Under Assumption \ref{ass4}, \textbf{SP} from the above primal problem (also the \textbf{SP} in our DDCCG algorithm) can be converted into \textbf{SP} of the dualized problem as follows.
\begin{align}
    \min_yc^Ty+&\max_{u, \lambda, \gamma, \mu} e^T \gamma+(k-Mu)^T \mu+y^T E^T \lambda \nonumber\\
    \text{s.t.} \quad & Ax \ge d  & \nonumber\\
    & G^T\lambda +B^T\gamma+H^T\mu=b  & \\
    & \lambda, \gamma, \mu \ge 0 & \nonumber \\
    & u \in \mathcal{U} & \nonumber
\end{align}    

Note that \textbf{SP} is a bi-linear program over two disjoint polyhedrons, it always has an optimal solution combining extreme points of these two polyhedrons, despite the value of $x$. Let $\hat{\mathcal{U}}=\{u_1,...,u_{N_{W_Y}}\}$ be the collection of extreme points of $\mathcal{U}$. Hence, the primal two-stage robust optimization problem is equivalent to:
\begin{gather}
    \min_xc^Tx +\max_{u \in \hat{\mathcal{U}}} \min_{y} b^Ty
\end{gather}

As a result, by enumerating values in the finite set $\hat{\mathcal{U}}$, it is equivalent to a mixed integer program in the following form problem (\ref{30}) (also the \textbf{MP} in our DDCCG algorithm). Let $\hat{Y}=\{y_1,...,y_{N_{W_Y}}\}$ be the corresponding recourse decision variables.
\begin{align}\label{30}
    \min_xc^Tx & +\eta \nonumber\\
    \text{s.t.} \quad & Ax \ge d  & \nonumber\\
    & \eta \ge b^Ty_l, \quad l=1,...,N_{W_Y}  & \nonumber\\
    & Gy_l \le Ex,\quad l=1,...,N_{W_Y} & \\
    & By_l \le e,\quad l=1,...,N_{W_Y}  & \nonumber\\
    & Hy_l \ge k-Mu_l,\quad l=1,...,N_{W_Y} & \nonumber \\
    & x\in S_x, y_l \in S_y,\quad l=1,...,N_{W_Y} \nonumber
\end{align}  

Note that solving the \textbf{SP} of the primal problem will expand (\ref{30}) by including one more $u_l$ and its corresponding recourse variables and constraints. Next, we show that any repeated $u^*$ in the procedure implies optimality, i.e. \textit{LB} = \textit{UB} (in the DDCCG algorithm). Assume that in the $k$-th iteration $(x^*, \eta^*)$ and $(u^*, y^*)$ are optimal solutions of (31) and \textbf{SP} of the primal problem, respectively, and $u^*$ appear in a previous iteration. On one hand, from Step 4 of the DDCCG algorithm, we have $\textit{UB}\le c^Tx^*+b^Ty^*$. On the other hand, because $u^*$ has been identified in a previous iteration, (\ref{30}) in this iteration is identical to that in $(k-1)$-th iteration. Hence, $(x^*, \eta^*)$ is the optimal solution of (\ref{30}) in the $(k-1)$-th iteration. From Step 2 of our DDCCG algorithm, we have $\textit{LB}\ge c^Tx^*+\eta^* \ge c^Tx^*+b^Ty^*$, where the last inequality follows from the fact that $u^*$ is already identified and the related constraints are added to (31) before or in the $(k-1)$-th iteration. Consequently, we get $\textit{LB}=\textit{UB}$. Then, the conclusion follows immediately from the fact that $\hat{\mathcal{U}}$, i.e. the set of all extreme points of the polyhedral uncertainty set, is finite. Therefore, the complexity of solving the two-stage robust optimization problem without $W_X$ is $O(N_{W_Y})$.
\end{proof}

\begin{remark}
    Proposition \ref{prop1} proves that the DDCCG algorithm converges for the general decision-dependent two-stage robust optimization model. Proposition \ref{prop2} gives an estimate of the algorithm's complexity for problems that satisfy specific assumptions and structures.
\end{remark}

\begin{figure*}[!t]
\centering
\includegraphics[width=5.5in]{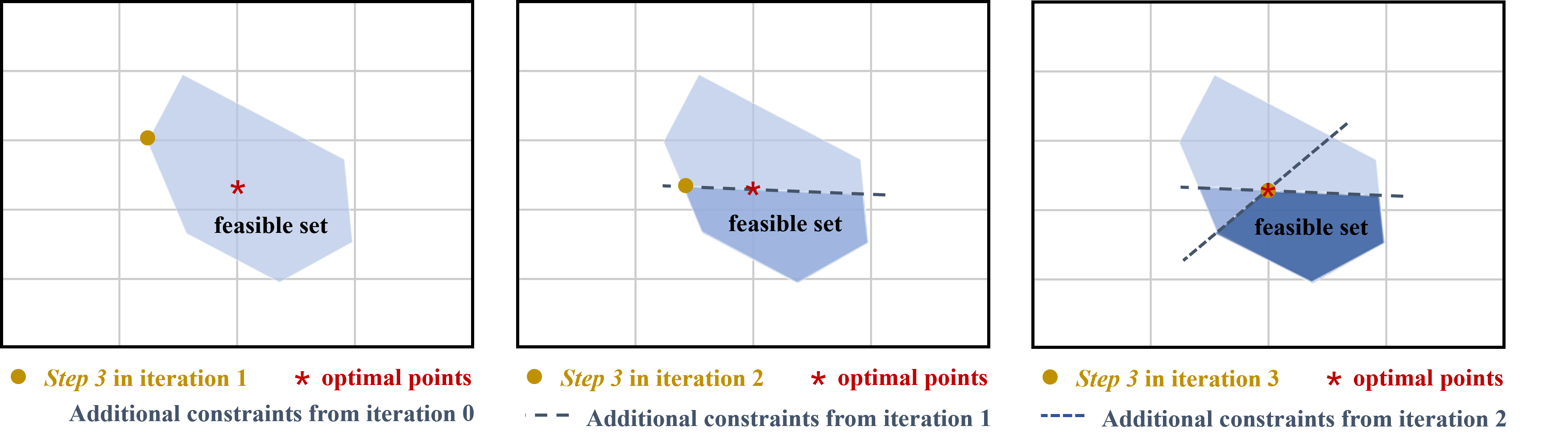}
\caption{DDCCG algorithm diagram.}
\label{fig3}
\end{figure*}

\section{Case Study} \label{Section5}
This section will initially model a real-world engine assembly factory with the Petri net. Subsequently, the model and the DDCCG algorithm will be employed to obtain simulation results, which will then be further elaborated.

\subsection{Petri Net Modeling}
To demonstrate the efficacy of our proposed methodology, we have utilized a real-world engine assembly factory as a representative case of discrete manufacturing \cite{eng}. A Petri net is employed for its quantitative description \cite{petri1}. This is because Petri net is highly effective at characterizing the sequential and concurrent relationships between events and operations over a time series \cite{petri2}. Petri net consists of two basic elements: places and transitions. Places represent states in the system, while transitions represent transfers between states. A real engine assembly factory involves multiple production, assembly, transportation, and testing stages, and we can model it using Petri net (see Fig \ref{fig4}).

From the initial raw material to the final product (engine), the entire industrial process is divided into 12 workshops, 14 types of equipment, and multiple buffers (node label in Appendix). The entire production process can be divided into 8 major segments according to the specific function of the workshop: 
\begin{itemize}
    \item \textbf{Parts Production}: stamping machine (SM)
    \item \textbf{Datum Milling}: milling machine, face milling machine
    \item \textbf{Crankshaft Grinding}: circular grinding machine (CG), CNC grinding machine
    \item \textbf{Crankshaft Mounting}: automatic assembly line ($\text{AAL}_1$), robotic assembly line ($\text{RAL}_1$)
    \item \textbf{Piston Assembly}: piston assembly machine (PAM), automatic assembly machine ($\text{AAM}_1$)
    \item \textbf{Cylinder Mounting}: automatic assembly line ($\text{AAL}_2$), robotic assembly line ($\text{RAL}_2$)
    \item \textbf{Toothed Belt Mounting}: belt mounting machine (BMM), automatic assembly machine ($\text{AAM}_2$)
    \item \textbf{Testing}: Integrated test platform
\end{itemize}. 

In real-world production, the sequence of crankshaft grinding and cylinder mounting can be altered. If the semi-finished product is processed through the same equipment a second time, the detector will perform a check and then direct it to the buffer. This process can be employed to enhance the yield. Furthermore, the P54 node has been designated as a by-product output buffer, which outputs by-products containing cylinder liners. The time frame was set at 24 hours a day.

\begin{figure*}[!t]
\centering
\includegraphics[width=6in]{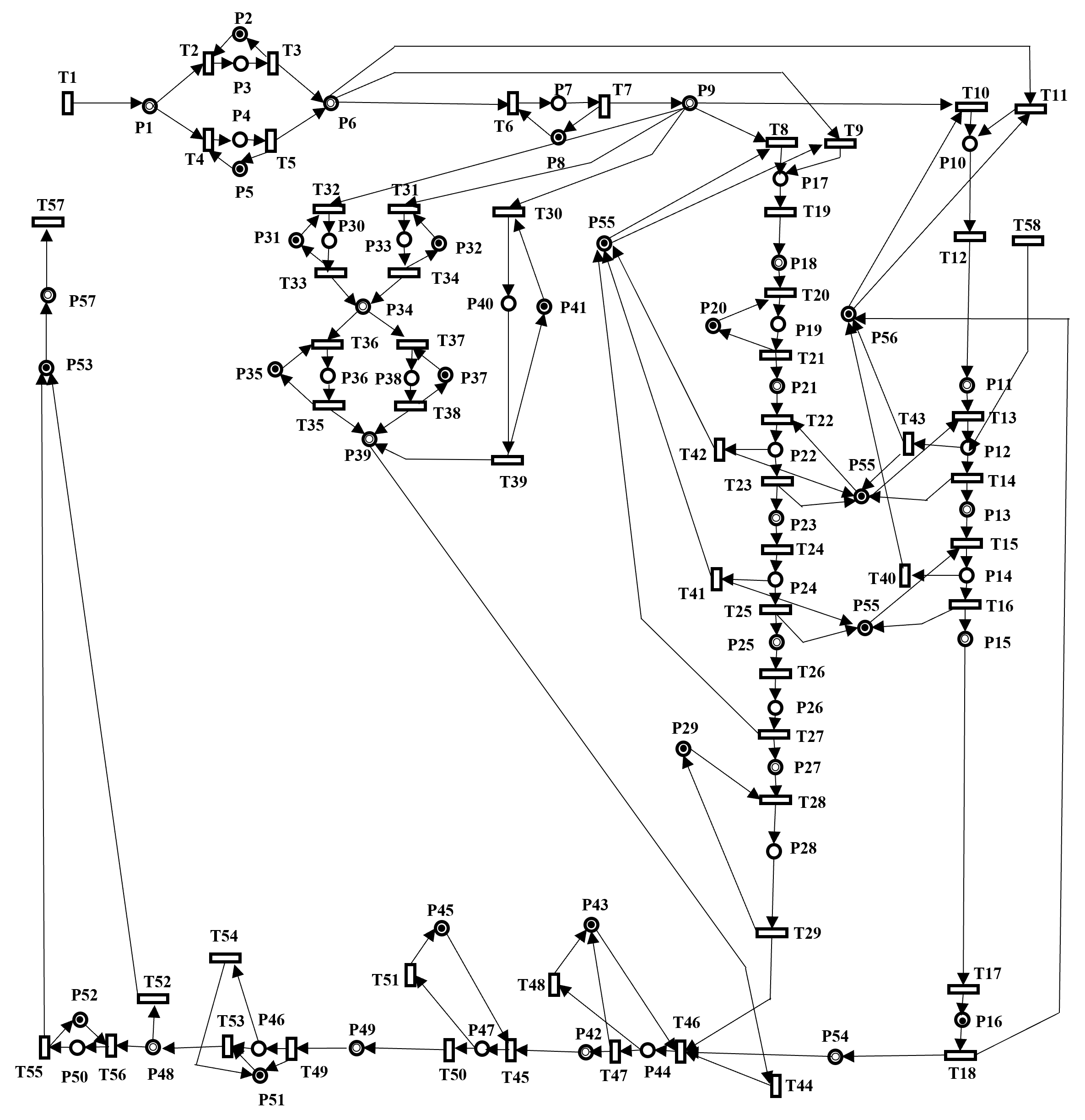}
\caption{Petri net for engine assembly lines.}
\label{fig4}
\end{figure*}

\subsection{Analysis of Results}
\subsubsection{Improved Production Performance}
The proposed model has the potential to significantly reduce production costs. As demonstrated in Table \ref{tab1}, our model, either without DDUs or with DDUs, outperforms both historical data and ideal data in terms of production performance and electricity cost. Our model permits the integrated scheduling of energy and production equipment, in contrast to a real factory. This results in a reduction in power consumption within the industrial park and an increase in production. In comparison to the model without DDUs, the model with DDUs maintains the same primary product output while reducing energy consumption and improving by-product output. We then will specifically analyze the sources of variability between the DDU model and the model without DDU.

\begin{table*}[t]
\renewcommand{\arraystretch}{1.3}
\caption{Comparison of production costs}
\centering
\begin{tabular}{cccccc}
\toprule
\textbf{ModelType} & \textbf{Power Cost (CNY)} & \textbf{Average Cost (CNY)} & \textbf{Main Products (CNY)} & \textbf{By-products (CNY)} & \textbf{Objective Function (CNY)}\\
\midrule
Original Data & 1042.00 & 43.42 & 2250 & 200 & - \\
Expected Data & 980.00 & 40.83 & 2250 & 200 & - \\
Model without DDU & 945.09 & 39.39 & 2500 & 160& -2936.41 \\
*Model with DDU & 922.97 & 38.46 & 2500 & 250 & -3340.78\\
\bottomrule
\end{tabular}
\label{tab1}
\end{table*}

The reduced electricity cost of the model with DDUs relative to the one without DDUs can be attributed to its enhanced efficacy in circumventing peak periods of electricity consumption. As illustrated in Fig. \ref{fig5}, during the RTP peak period (i.e., the period when the golden folded line reaches the peak plateau), the DDU model (dark blue) exhibits a notable reduction in electricity consumption compared to the model without DDU (light blue), with a difference of approximately 3-5\% for each segment of the bar chart.

\begin{figure}[!t]
\centering
\includegraphics[width=3in]{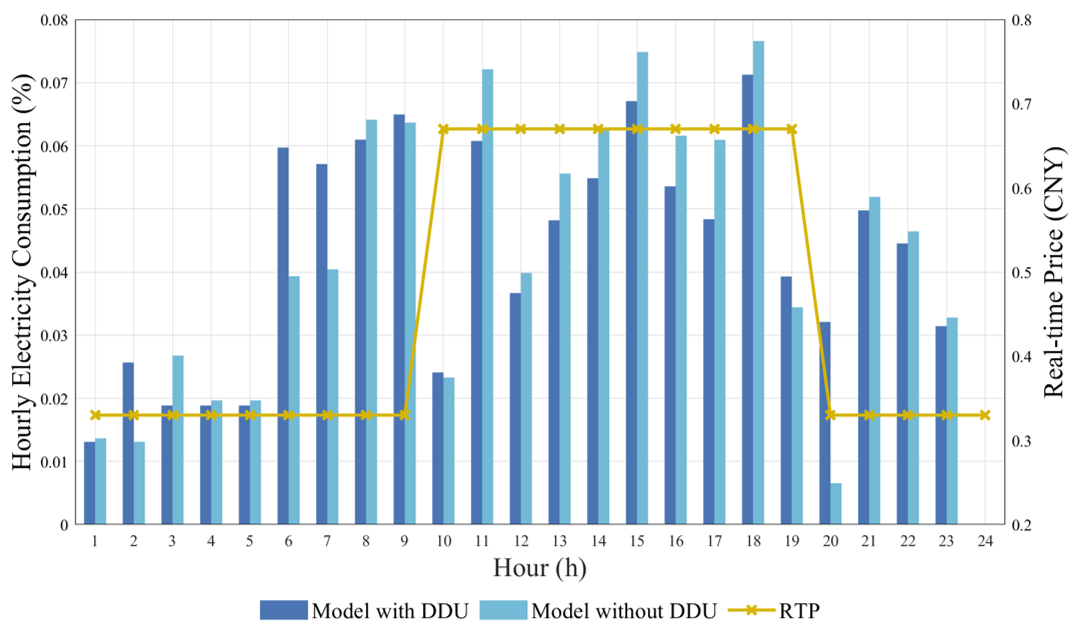}
\caption{Historiy of electricity cost and RTP for different models.}
\label{fig5}
\end{figure}

In comparison to the model without DDUs, the rise in by-product production in the one with DDUs can be attributed to the decentralization of production, which allows for the optimal utilization of redundant raw materials for the manufacture of the primary product. The total number of instances of utilization for each device is calculated over 24 hours and a matrix based on the data is plotted considering the device's location. Bilinear interpolation is then employed to expand the matrix and generate a contour map (Fig. \ref{fig6}). Compared to the model without DDUs, The DDU-containing model has two production centers of gravity (yellow areas), while the production line on the right side contains the P54 node, which can be utilized for the sale of by-products. Therefore, the utilization of redundant primary product materials for the synthesis of by-products for sale represents a viable strategy for enhancing profitability. Concurrently, the implementation of a multi-center of gravity production line can also serve to reinforce the plant's resilience to unforeseen circumstances. In the event of a disruption to one production line, the mobilization of another line to assume the role of a center of gravity can effectively mitigate the impact of such an occurrence.

\begin{figure}[!t]
\centering
\begin{subfigure}{0.25\textwidth}
  \centering
  \includegraphics[width=1.45in]{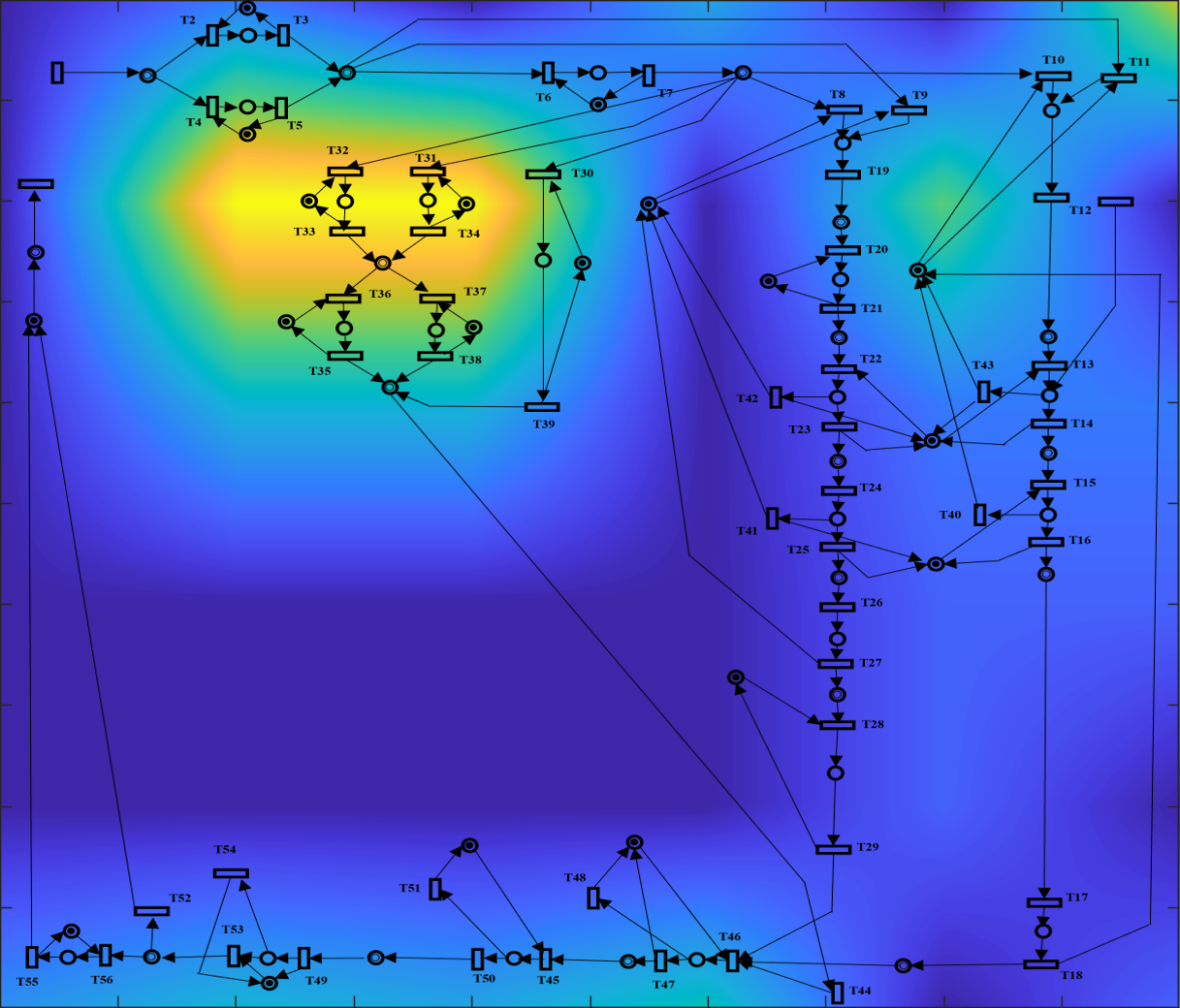}
  \caption{No-DDU Model.}
  \label{fig:sub1}
\end{subfigure}%
\begin{subfigure}{0.25\textwidth}
  \centering
  \includegraphics[width=1.45in]{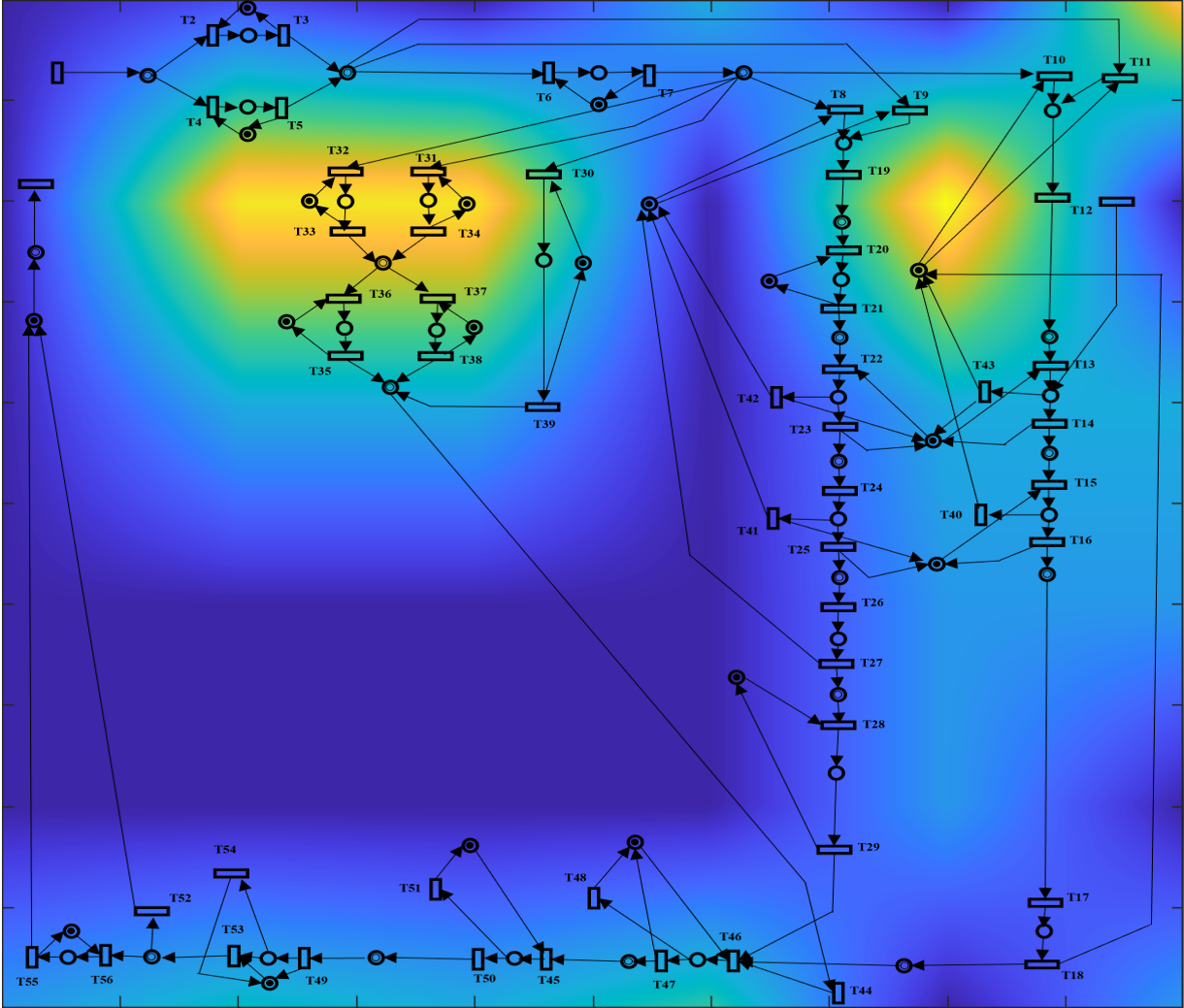}
  \caption{DDU Model.}
  \label{fig:sub2}
\end{subfigure}
\caption{Contour map of equipment utilization.}
\label{fig:test}
\label{fig6}
\end{figure}

\subsubsection{Frequency Regulation \& Peak Shaving}
The proposed model is capable of adapting to the requisite frequency regulation demands, and the DDU model can enhance the system's peak shaving capability. Fig. \ref{fig7} illustrates the outcomes of a comparative analysis between the DDUs model and the original data set, with and without the DDUs model, respectively, in response to the specified frequency regulation demand. It can be observed that both the model with DDUs and without DDUs are closer to the regulation demand fold than the original data. The difference between the model with DDUs and the model without DDUs is that it is less responsive to frequency regulation, but has a smoother power consumption fold (i.e., better peak shaving performance). These two differences will be further elucidated.

\begin{figure}[!t]
\centering
\begin{subfigure}{.5\textwidth}
  \centering
  \includegraphics[width=3.5in]{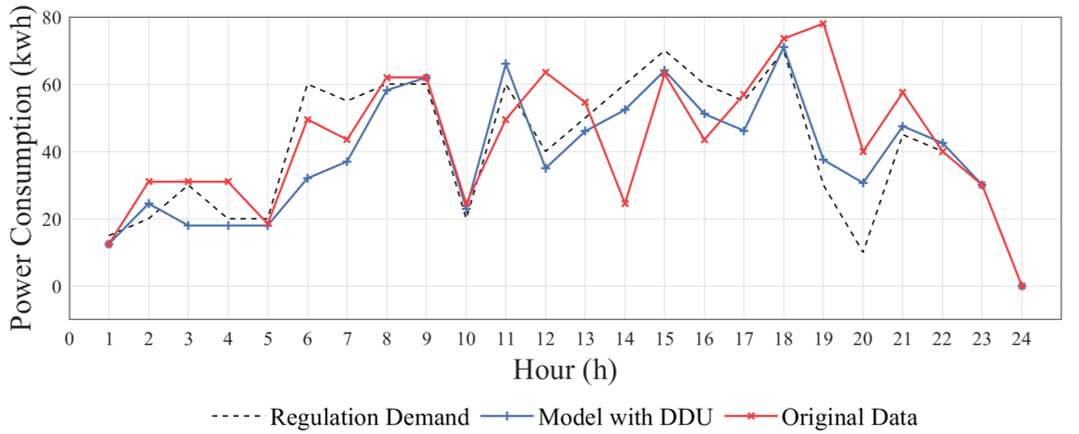}
  \caption{Comparison of DDU with raw data.}
  \label{fig2:sub1}
\end{subfigure}%
\newline
\begin{subfigure}{.5\textwidth}
  \centering
  \includegraphics[width=3.5in]{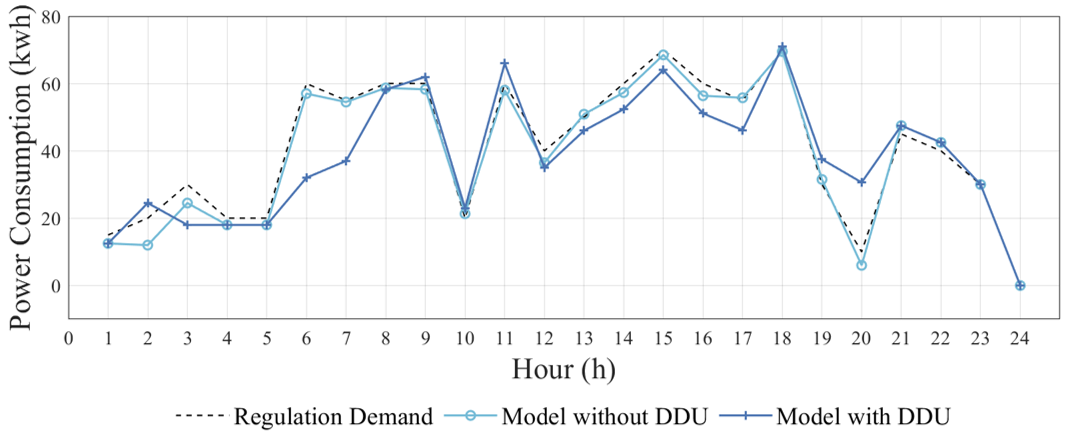}
  \caption{Comparison of DDU with No-DDU.}
  \label{fig2:sub2}
\end{subfigure}
\caption{Comparison of FR realizations for different scenarios.}
\label{fig2:test}
\label{fig7}
\end{figure}

The inferior performance of the DDU model in frequency regulation relative to the non-DDU model can be attributed to its dynamic adjustment of the weights of the frequency regulation penalty term. Fig. \ref{fig8} illustrates the values of the frequency regulation penalty term and the value of $\eta$ at different stages when we iterate the DDCCG Algorithm, while the dotted line indicates $\eta_h$  of the model without DDUs. It is evident that during the initial stage where there are more violations of the frequency regulation target, the value of $\eta$ is higher. However, once the model believes that the performance on frequency regulation has performed optimally enough, the weight is reduced, allowing the model to focus on other targets. This performance is a consequence of the DDU model prioritizing cost savings in subsequent iterations, as previously discussed. Overall, this trade-off can effectively assist producers in enhancing their profits and has practical industrial significance. The following subsection will provide a more detailed examination of this element.


The DDU model is capable of achieving superior peak shaving outcomes utilizing smoothing the production curve in comparison to the non-DDU model. The violins of electricity usage for the two models were plotted (Fig. \ref{fig9}), and it can be observed that the DDU-containing model exhibits a smoother curve and has a smaller variance. This is essentially because the DDU model contains greater uncertainty, and therefore the system adopts a conservative, smooth power usage strategy to cope with potential changes (e.g., RTP mutations). In terms of the power grid, this behavior represents increased BESS storage and more advanced power usage planning. The reduction in equipment usage during peak RTP periods is a reflection of this.

\begin{figure}[!t]
\centering
\includegraphics[width=3in]{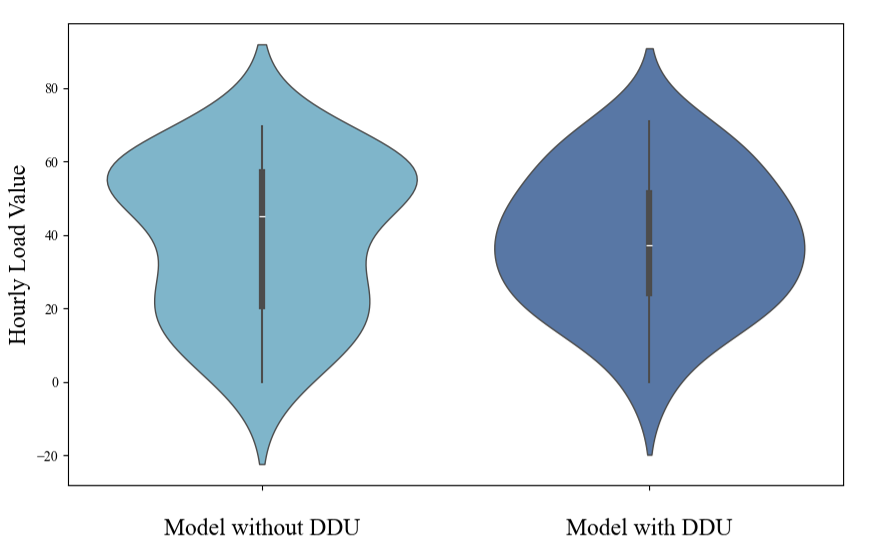}
\caption{Violin diagrams for DDU's and No-DDU's models.}
\label{fig8}
\end{figure}

\subsubsection{DDU Impact Assessment}
In the case of the yield DDU, the original value of the device's inherent error, represented by $\underline{\alpha}_{n,p}$, affects the model's estimate of the uncertainty. The $\underline{\alpha}_{n,p}$ represents the uncertainty inherent to the device's yield. Fig. 10 is constructed, plotting the model's upper bound (the optimal value for solving the master problem) with iteration for varying values of $\underline{\alpha}_{n,p}$. The error band represents the relative interval width in \textbf{MP} for $\mathcal{S}_1$. As the \textbf{SP} continues to add constraints, the error band shrinks (the operator prefers line combinations with a reduced interval). In addition, the gap did not shrink for $\underline{\alpha}_{n,p}=0.1$ in the later iterations, which is because the yield is not the only factor that affects the final profit. However, it should be noted that in \textbf{MP}, the constraints in \textbf{SP} are not necessarily satisfied for $\mathcal{S}_1$. It is evident that models with greater inherent errors, represented by $\underline{\alpha}_{n,p}$, exhibit slower convergence, inferior optimal solutions, and larger error bands. This is because the significant intrinsic yield error of the device makes the DDU estimation challenging, resulting in an increased error band. Consequently, a substantial discrepancy will result in a diminished yield, thereby compromising the optimal value.

\begin{figure}[!t]
\centering
\includegraphics[width=3in]{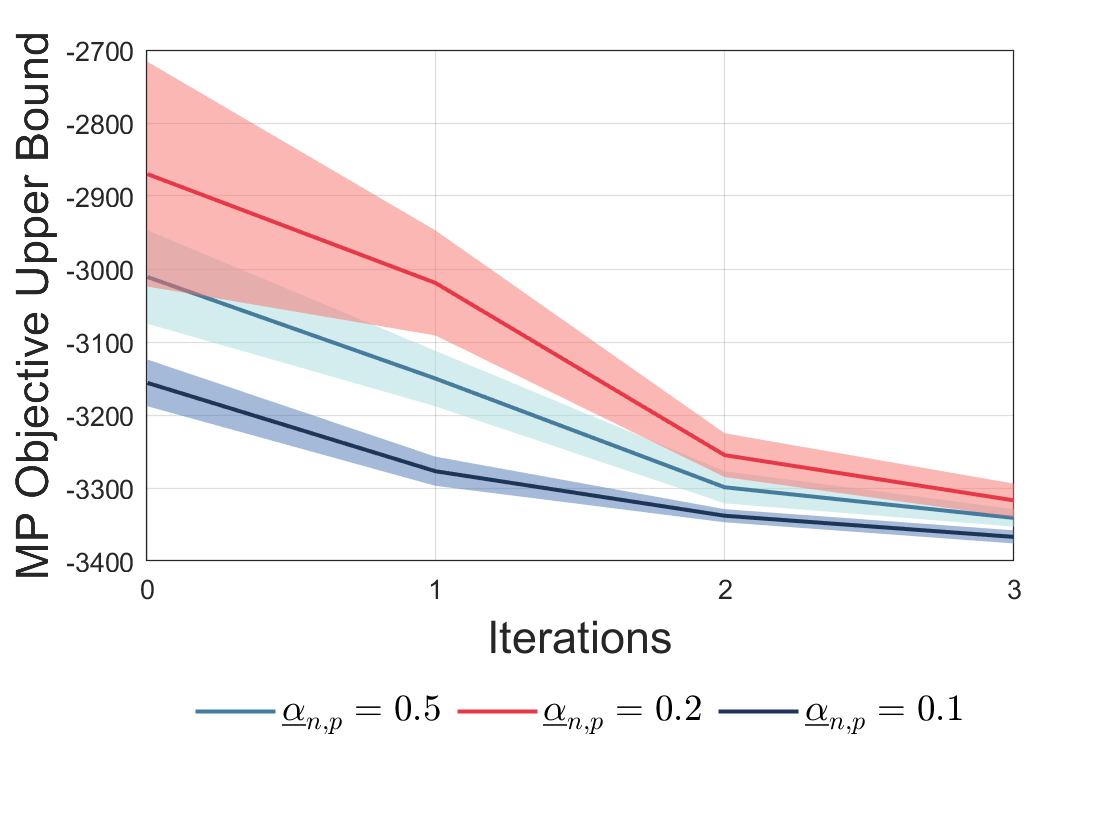}
\caption{The upper bound error band diagram under different $\underline{\alpha}_{n,p}$.}
\label{fig9}
\end{figure}

For FR penalty DDU, the plant operator employs a dynamic adjustment of the target weights by the anticipated benefits. We find that the value of $\Delta E^h$ is increasing as the algorithm progresses. This suggests that the constraints associated with the FR penalty are becoming less significant and that producers are directing their attention toward production rather than frequency regulation. This is due to the fact that the rate of decrease is dependent upon the degree of uncertainty present within the optimization problem. During the solution process of the iterative loop, \textbf{SP} continues to add new constraints to \textbf{MP}. At the beginning of the iteration, the problem uncertainty is considerable, and the model is unable to accurately estimate the percentage of FR objective terms at the optimal solution. However, with the addition of new constraints, the model uncertainty decreases, and thus the model can rapidly converge to the optimal objective function weights.

For the product structure DDU, there is a local optimum for the ratio of primary and secondary products. In the experiment we change the value of confidence $\gamma$, i.e., we change the values of $\zeta^h,\underline{\theta}^h,\bar{\theta}^h$. For a given $G$, a larger $\gamma$ brings a larger value of $\underline{\theta}^h,\bar{\theta}^h$, i.e., the upper and lower bounds of the by-product share are simultaneously higher. We solved for different $\gamma$ to obtain the optimal values of the model in Table \ref{tab2}. It can be found that the value of the objective function decreases and then increases as the ratio of main and by-products decreases. This is because appropriate output of by-products can effectively utilize redundant raw materials; however, by-products are less profitable, so excessive output will reduce profits.

\begin{table}[t]
\renewcommand{\arraystretch}{1.3}
\caption{Optimal values under different $\gamma$}
\centering
\begin{tabular}{ccc}
\toprule
\textbf{$\gamma$} & \textbf{By-products Tendency} & \textbf{Objective Function (CNY)}\\
\midrule
0.01 & $\downarrow \downarrow$ & -3319.52 \\
0.02 & $\downarrow$ & -3328.47 \\
0.05 & $\uparrow$ & \textbf{-3340.78} \\
0.10 & $\uparrow \uparrow$ & -3312.27 \\
\bottomrule
\end{tabular}
\label{tab2}
\end{table}

\begin{table*}[h]
\renewcommand{\arraystretch}{1.3}
\caption{MEANINGS OF PLACES AND TRANSITIONS IN THE MODEL}
\label{table_example}
\centering
\footnotesize
\begin{tabular}{p{4cm}|p{4cm}|p{4cm}|p{4cm}}
\hline
\multicolumn{2}{c|}{\textbf{Places}} & \multicolumn{2}{c}{\textbf{Transitions}} \\
\hline
P1: Buffer0 & P30: Crankshaft on CG & T1: Raw material input & T30: Production information input \\
P2: SM & P31: CG & T2: SM start & T31: CNC grinder start \\
P3: Raw material in SM & P32: CNC grinder & T3: SM running & T32: CG start \\
P4: Raw material in SM & P33: Crankshaft on CNC grinder & T4: SM start & T33: CG running \\
P5: SM & P34: Buffer3 & T5: SM running & T34: CNC grinder running \\
P6: Buffer1 & P35:  $\text{RAL}_2$ & T6: Millin, face milling start & T35: $\text{RAL}_2$ running \\
P7: Parts in milling, face milling & P36: Cylinder in $\text{RAL}_2$ & T7: Millin, face milling running & T36: $\text{RAL}_2$ start \\
P8: Millin, face milling & P37:  $\text{RAL}_2$ & T8: CNC grinder feed system start & T37: $\text{RAL}_2$ start \\
P9: Buffer2 & P38: Cylinder in $\text{RAL}_2$ & T9: CNC grinder start & T38: $\text{RAL}_2$ running \\
P10: Crankshaft on CG & P39: Buffer8 & T10: CG start & T39: Information processing \\
P11: Buffer4-1 & P40: Information control system & T11: CG start & T40: Sampling and transportation \\
P12: Main bearing in $\text{AAL}_1$ & P41: Information control system & T12: CG running & T41: Sampling and transportation \\
P13: Buffer5-1 & P42: Buffer9 & T13: $\text{AAL}_1$ start & T42: Sampling and transportation \\
P14: Semi-finished in machine & P43: $\text{AAL}_1$ & T14: $\text{AAL}_1$ running & T43: Sampling and transportation \\
P15: Buffer6-1 & P44: Main bearing in $\text{AAL}_1$ & T15: PAM start & T44: Cargo transportation \\
P16: Cylinder in $\text{AAL}_2$ & P45: $\text{AAM}_2$ & T16: PAM running & T45: $\text{AAM}_1$ start \\
P17: Crankshaft on CNC grinder & P46: Belt on $\text{AAM}_2$ & T17: $\text{AAL}_2$ start & T46: $\text{AAL}_1$ start \\
P18: Buffer4-2 & P47: Semi-finished in $\text{AAM}_2$ & T18: $\text{AAL}_2$ running & T47: $\text{AAL}_1$ running \\
P19: Crankshaft on CNC grinder & P48: Buffer11 & T19: CNC grinder & T48: $\text{AAL}_1$ running \\
P20: Engineer communication unit & P49: Buffer10 & T20: CNC grinder running & T49: BMM, $\text{AAM}_2$ start \\
P21: Buffer4-3 & P50: Main product in IDU & T21: Bearing output & T50: $\text{AAM}_1$ running \\
P22: Bearing in  $\text{RAL}_1$ & P51: BMM, $\text{AAM}_2$ & T22: $\text{RAL}_1$ start & T51: $\text{AAM}_1$ running \\
P23: Buffer5-2 & P52: AI detection unit & T23: $\text{RAL}_1$ running & T52: Sampling and transportation \\
P24: Semi-finished in machine & P53: Integrated test platform & T24: PAM start & T53: BMM, $\text{AAM}_2$ running \\
P25: Buffer6-2: Register & P54: Buffer7-2 & T25: PAM running & T54: BMM, $\text{AAM}_2$ running \\
P26: Cylinder in $\text{AAL}_2$ & P55: Worker inspection station & T26: $\text{AAL}_2$ start & T55: AI detection unit running \\
P27: Buffer7-1 & P56: Worker inspection station & T27: $\text{AAL}_2$ running & T56: AI detection unit start \\
P28: Defective in repair station & P57: Buffer12 & T28: Defective repair & T57: Main product output \\
P29: Engineer repair station & & T29: Defective transportation & T58: Main bearing input \\
\hline
\end{tabular}
\label{tab3}
\end{table*}

\section{Conclusion}
The implementation of a co-scheduling strategy for energy and production is crucial for discrete manufacturing. In this paper, a two-stage robust optimization model considering DDUs is proposed for energy management and equipment dispatch. The model needs to consider not only traditional DIU variables but also multiple types of DDU variables (ambiguous sets, probability distributions, etc.). After constructing and subsequently linearizing the model, we proposed a novel algorithm named DDCCG with fast convergence speed for solving this decision-dependent two-stage robust optimization problem. Ultimately, the resulting solutions are the optimal operating point for each device and the energy access scheme for the energy system. We conducted a case study by using a real-world engine assembly line modeled by Petri net method. The results show that our model effectively reduces the production cost by lowering the electricity demand without compromising the final product output. At the same time, our DDU model satisfies the frequency regulation objective better, achieving peak shaving as well. In addition, our approach contributed to the enhancement of resilience to risk in the production line. In conclusion, the findings of our research are of significant value in enhancing the efficiency of discrete industrial production and energy dispatch, with promising future applications.

\appendices

\section{Petri Net Node Description}
Engine assembly line Petri net nodes are described as TABLE \ref{tab3}.




%





\ifCLASSOPTIONcaptionsoff
  \newpage
\fi



%

\bibliographystyle{ieeetr}
\bibliography{reference}

%







\end{document}